\RequirePackage{nameref}
\documentclass[a4paper,onecolumn,11pt,unpublished]{quantumarticle}
\pdfoutput=1

\usepackage[utf8]{inputenc}
\usepackage[english]{babel}
\usepackage[T1]{fontenc}
\usepackage{amsmath}
\usepackage{amssymb}
\usepackage{amsthm}
\usepackage{graphicx}
\graphicspath{{figures/}}
\usepackage{booktabs}
\usepackage{array}
\usepackage{enumitem}
\usepackage{algorithm, algorithmic}
\usepackage{xcolor}
\usepackage{tikz}

\newtheorem{theorem}{Theorem}[section]
\newtheorem{lemma}[theorem]{Lemma}
\newtheorem{corollary}[theorem]{Corollary}

\newtheorem{definition}[theorem]{Definition}

\theoremstyle{remark}
\newtheorem{remark}[theorem]{Remark}

\newcommand{\F}{\mathbb{F}}
\newcommand{\Z}{\mathbb{Z}}
\newcommand{\cP}{\mathcal{P}}
\newcommand{\wt}{\mathrm{wt}}
\newcommand{\rank}{\mathrm{rank}}
\newcommand{\rs}{\mathrm{rs}}
\newcommand{\Ann}{\mathrm{Ann}}
\newcommand{\supp}{\mathrm{supp}}

\title{Quantum bivariate bicycle codes with weight-8 checks surpassing the BB benchmark}
\author{Liangdong Lu}
\email{kelinglv@163.com}
\author{Ruipan Yang}
\author{Guanmin Guo}
\affiliation{The Department of Basic Science, Air Force Engineering University, Xi'an, Shaanxi, P. R. China}

\begin{document}
\sloppy
\maketitle

\begin{abstract}
Bivariate bicycle (BB) codes of Bravyi \emph{et al.}~\cite{Bravyi2024} are
quantum low-density parity-check codes with weight-$6$ checks, exemplified by
$[[144,12,12]]$ with $kd^2/n=12$. We develop the algebraic structure theory of
BB-type codes with weight-$8$ checks (weight-$4$ generator polynomials) and use
it, together with an exactly validated search pipeline, to construct and
certify new codes. We prove an exact dimension formula $k=2\dim R/(A,B)$
(forcing even $k$), a $4\ell m$-element symmetry group on generator pairs, an
$X/Z$ distance equality $d_X=d_Z$, and a family of subgroup-coset kernel
vectors giving rigorous distance upper bounds and a design rule for
high-distance constructions; all distances are computed exhaustively by a
cross-validated bit-mask verifier. At $n=144$ the pipeline returns a census of
$53$ codes whose strongest members surpass the BB benchmark:
$[[144,6,d\ge 15]]$ exceeds the benchmark distance $12$ (certified $d\ge 15$),
$[[144,10,12]]$ reaches it with weight-$8$ checks, and $[[144,16,10]]$ encodes
a third more logical qubits at $kd^2/n=11.11$ ($7.4\%$ below benchmark) while
decoding no worse. At $n=72$, $[[72,14,8]]$ attains $kd^2/n=12.44$---more than
twice the same-length BB code---and decodes better; a circuit-level memory
experiment places our weight-$8$ codes at $\approx 0.1\%$ pseudo-threshold
versus $\approx 0.4\%$ for the BB reference under an identical model,
quantifying the threshold cost of the heavier checks. All structural statements
are verified numerically on the whole census.
\end{abstract}

\section{Introduction}
\label{sec:intro}

Quantum low-density parity-check (LDPC) codes offer a route to fault-tolerant
quantum memory whose encoding overhead need not grow with the size of the
computation, and they have accordingly attracted intense recent
interest~\cite{Breuckmann2021,Panteleev2019}. A milestone is the bivariate
bicycle (BB) code family of Bravyi \emph{et al.}~\cite{Bravyi2024}, whose
Tanner graphs embed on a two-dimensional lattice with graph thickness two,
giving weight-six checks and a depth-seven syndrome circuit. Their flagship
$[[144,12,12]]$ code reaches $kd^2/n=12$ and, in circuit-level simulations,
preserves twelve logical qubits for almost a million syndrome cycles at a
$0.1\%$ physical error rate. The construction is deliberately restrictive: each
generator polynomial $A$ and $B$ is a sum of three \emph{pure} monomials $x^a$
or $y^b$, and it is precisely this restriction that enables the thickness-two
embedding. Bravyi \emph{et al.} themselves remark that group-based codes with
weight-eight checks can attain a better $(n,k,d)$ trade-off~\cite{Bravyi2024},
but a systematic exploration of such higher-column-weight codes was left open.
The BB construction has since sparked intense activity: structural variants
(coset-based~\cite{Aydin2026}, mirror~\cite{Khesin2026},
univariate-bicycle~\cite{Univariate2026}, stairway~\cite{Stairway2026}, and
self-dual-stacked~\cite{SelfDual2026} codes), search methods (LLM-guided and
evolutionary
discovery~\cite{LLMSearch2026,SCE2026,MultiAgent2026,RLCoDesign2026}),
decoding~\cite{MatchingDecoder2026,CertDecode2026,Frontier2026,SeqBP2026}, and
fault-tolerance analyses~\cite{Concatenate2026,ForcedGap2026,Networked2026}.

Here we relax that restriction, allowing $A$ and $B$ to be sums of four
arbitrary bivariate monomials $x^a y^b$, hence checks of weight eight. Two
themes run through the paper. The first is algebraic: we develop the structure
theory of this family, obtaining an exact dimension formula, a symmetry group
acting on generator pairs, an exact equality of the $X$- and $Z$-distances, and
a family of subgroup-coset kernel vectors that yields rigorous upper bounds on
the distance and a concrete design rule for high-distance constructions. The
second is computational: an exactly validated search--certify--verify pipeline
in which distances are computed exhaustively rather than estimated by
decoders. Applied to the grids $(\ell,m)\in\{(12,6),(8,9)\}$ at $n=144$, the
pipeline returns a census of $53$ codes whose strongest members surpass the BB
benchmark at the same length: $[[144,6,d\ge15]]$ with certified distance
exceeding twelve, $[[144,10,12]]$ reaching the benchmark distance with
weight-eight checks, and $[[144,16,10]]$ encoding a third more logical qubits
at only $7.4\%$ lower $kd^2/n$. At $n=72$ it returns $[[72,14,8]]$, whose
$kd^2/n=12.44$ is more than twice that of the same-length BB code, together
with a matching advantage in decoding. We further benchmark the leading code
against the BB reference under BP-OSD decoding and carry out a full
circuit-level memory experiment to quantify the threshold cost of the heavier
checks. The new codes are listed in Table~\ref{tab:codes}.

Every structural statement is additionally verified numerically on the whole
census (Remark~\ref{rem:numcheck}), and every reported distance is either
exact or explicitly marked as a certified lower bound.

\subsection{Related work}
\label{subsec:related}

BB codes belong to the two-block group-algebra (2BGA)
framework~\cite{Kovalev2013,LinPryadko2024}, a descendant of the bicycle
construction of MacKay \emph{et al.}~\cite{MacKay2004}, in which a finite group and two
group-algebra elements specify the code; BB codes use the abelian group
$\Z_\ell\times\Z_m$. The framework has been generalized in several directions
orthogonal to ours. Lin and Pryadko treat general (including non-abelian)
groups~\cite{LinPryadko2024}, and Aydin, Tamo and Barg~\cite{Aydin2026}
replace the regular group action by the action on cosets of a subgroup,
finding weight-eight codes such as $[[128,16,12]]$ and $[[168,16,15]]$; they
also show that weight-eight two-block Tanner graphs have thickness exactly
three and give a maximally packed depth $w+2$ syndrome schedule. Khesin and
Lu~\cite{Khesin2026} go further and leave the CSS world entirely: their
mirror codes contain all abelian 2BGA codes up to permutation and local
Cliffords but are generally non-CSS, and come with fault-tolerant
syndrome-extraction circuits. Eberhardt and Steffan~\cite{EberhardtSteffan2024}
construct BB codes with explicit nice bases of logical operators supporting
fold-transversal Clifford gates, sequences of BB codes from covering graphs
were given in~\cite{SequencesBB2024}, and a coprimality-based existence theory
is developed in~\cite{CoprimeBB2026}. Bicyclic (hyperbolic) relatives of
these codes were studied earlier in~\cite{RayuduSarvepalli2019}. A range of
structural variants have since appeared: univariate-bicycle
codes~\cite{Univariate2026}, stairway codes obtained by Floquetifying abelian
two-block group algebra codes~\cite{Stairway2026}, self-dual stacked
(double-layer BB) codes~\cite{SelfDual2026}, BB codes on erasure
qubits~\cite{BiBiEQ2026}, and copy-cup gates on tensor products of group
algebra codes~\cite{CopyCup2026}.

On the search side, several automated methods target this design space:
LLM-guided evolutionary discovery of BB codes~\cite{LLMSearch2026}, structured
concept evolution for lifted-product codes~\cite{SCE2026}, a multi-agent
framework for practical qLDPC discovery~\cite{MultiAgent2026}, and
reinforcement-learning co-design of BB codes and
decoders~\cite{RLCoDesign2026}. Our exact-distance method is complemented by
recent work on the computational side of distance determination: a large-scale
empirical study of SAT/MaxSAT/SMT formulations~\cite{SATDist2026}, a complete
distance-verification algorithm for pair-partition codes~\cite{PairPart2026},
and certified minimum-distance upper-bound witnesses~\cite{KasaiDist2026}.

Two further lines of recent work are closest in spirit to ours. Lu
\emph{et al.}~\cite{DivisorDriven2026} study the \emph{univariate} (cyclic)
bicycle family, showing that the construction collapses to the ring
$\F_2[x]/(x^l-1)$, with the dimension given by a polynomial gcd and the
distance certified exactly through the Calderbank correspondence to additive
codes over $\F_4$; their search reaches, e.g., $[[66,20,7]]$ with
$kd^2/n=14.85$. Cruz-Benito \emph{et al.}~\cite{LLMSearch2026} use an
LLM-guided evolutionary search to discover BB codes. On the algorithmic side,
a minimum-weight-perfect-matching decoder for BB codes is proposed
in~\cite{MatchingDecoder2026}, and decoding on the erasure channel is analyzed
in~\cite{ErasureBB2026}.

Our work is complementary to all of these. We remain within the abelian CSS
(bivariate BB) setting, but supply what the above constructions largely lack:
\emph{exact}, cross-validated minimum distances in place of upper bounds, an
explicit structural theory that doubles as a design rule (dimension formula,
$X/Z$ equality, symmetry group, and subgroup-coset distance bounds), and
decoding and circuit-level benchmarks. The classical foundation for the
dimension formula is Imai's theory of two-dimensional cyclic
codes~\cite{Imai1977}.

\section{Preliminaries}
\label{sec:prelim}

\subsection{CSS codes}
A CSS code~\cite{Shor1995,Calderbank1996,Steane1996} is defined by binary matrices
$H_X, H_Z$ with $H_X H_Z^{\mathsf T}=0$, i.e.\ $\rs(H_X)\subseteq\ker(H_Z)$ and
$\rs(H_Z)\subseteq\ker(H_X)$. Its parameters $[[n,k,d]]$ are
\begin{equation}
k = n-\rank H_X-\rank H_Z,
\label{eq:k}
\end{equation}
\begin{equation}
d=\min\{d_X,d_Z\},\quad
d_X=\min_{\substack{v\in\ker H_Z\\ v\notin\rs H_X}}\wt(v),\quad
d_Z=\min_{\substack{v\in\ker H_X\\ v\notin\rs H_Z}}\wt(v).
\label{eq:d}
\end{equation}
A CSS code is \emph{degenerate} if its stabilizer contains an element of weight
$<d$.

\subsection{Bivariate bicycle codes}
Let $\ell,m\ge 1$ and
\begin{equation}
R=\F_2[x,y]/(x^{\ell}-1,\,y^{m}-1)\cong\F_2[G],\qquad G=\Z_\ell\times\Z_m .
\label{eq:R}
\end{equation}
We identify $f\in R$ with its coefficient vector in $\F_2^{\ell m}$ and with the
$\ell m\times\ell m$ matrix $M_f$ of multiplication by $f$; then $M_{fg}=M_fM_g$
and $M_f^{\mathsf T}=M_{f^{\mathsf T}}$, where
$f^{\mathsf T}(x,y)=f(x^{-1},y^{-1})$ is the \emph{transpose} polynomial. For
$A,B\in R$ the bivariate bicycle code $QC(A,B)$~\cite{Bravyi2024,Kovalev2013} has
check matrices
\begin{equation}
H_X=[\,A\mid B\,],\qquad H_Z=[\,B^{\mathsf T}\mid A^{\mathsf T}\,],
\label{eq:hxhz}
\end{equation}
of size $\ell m\times 2\ell m$; the code length is $n=2\ell m$. Throughout,
$\wt(f)$ denotes the number of nonzero coefficients of $f\in R$.

\section{Algebraic Structure of BB-Type Codes}
\label{sec:structure}

\subsection{The CSS condition is automatic}

\begin{lemma}\label{lem:css}
For every $A,B\in R$, the pair \eqref{eq:hxhz} satisfies $H_XH_Z^{\mathsf T}=0$.
Hence \emph{every} pair $(A,B)$ defines a CSS code.
\end{lemma}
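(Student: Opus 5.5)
The plan is a direct block-matrix computation that reduces everything to the commutativity of the group algebra $R\cong\F_2[G]$. First, I will write $H_Z^{\mathsf T}$ in block form using the transpose rule $M_f^{\mathsf T}=M_{f^{\mathsf T}}$ from the preliminaries. Since $H_Z=[\,M_{B^{\mathsf T}}\mid M_{A^{\mathsf T}}\,]$, transposing a $1\times 2$ block row gives the $2\times 1$ block column whose blocks are $M_{B^{\mathsf T}}^{\mathsf T}=M_{(B^{\mathsf T})^{\mathsf T}}=M_B$ and $M_{A^{\mathsf T}}^{\mathsf T}=M_A$. This uses that transposition is an involution on $R$: $f(x,y)\mapsto f(x^{-1},y^{-1})$ applied twice is the identity.

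Second, I will multiply the blocks:
\[
H_XH_Z^{\mathsf T}=M_AM_B+M_BM_A=M_{AB}+M_{BA}.
\]
The step from products of matrices to matrices of products is the stated multiplicativity $M_{fg}=M_fM_g$.

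Third, I will invoke commutativity. Because $G=\Z_\ell\times\Z_m$ is abelian, $R$ is a commutative ring, so $AB=BA$. Hence $M_{AB}+M_{BA}=2M_{AB}=0$ over $\F_2$. Since $H_XH_Z^{\mathsf T}=0$ is exactly the CSS condition recalled in the preliminaries, the pair \eqref{eq:hxhz} defines a CSS code for every choice of $A,B$. The second claim of the statement is therefore immediate and uses no hypothesis on the weights of $A$ and $B$.

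There is no genuine obstacle here. The only point needing care is the bookkeeping that the transpose of the block row $[\,B^{\mathsf T}\mid A^{\mathsf T}\,]$ produces the blocks in the order $B$ over $A$, so that the products pair up as $AB+BA$ rather than $AA+BB$. If one wanted full self-containment, one could also briefly justify $M_f^{\mathsf T}=M_{f^{\mathsf T}}$ on monomials. Multiplication by $x^ay^b$ permutes the basis of $\F_2^{\ell m}$, and the transpose of a permutation matrix is its inverse, which is multiplication by $x^{-a}y^{-b}$. The identity then extends to all $f$ by linearity.
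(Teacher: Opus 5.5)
Your proposal is correct and matches the paper's proof: transposing $H_Z$ puts the block $B$ over the block $A$, since $f\mapsto f^{\mathsf T}$ is an involution, so $H_XH_Z^{\mathsf T}=AB+BA=0$ by commutativity of $R$ in characteristic $2$. Your extra justification of $M_f^{\mathsf T}=M_{f^{\mathsf T}}$ through monomial permutation matrices is sound; the paper simply takes that identity from the preliminaries.
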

\begin{proof}
$H_XH_Z^{\mathsf T}=A(B^{\mathsf T})^{\mathsf T}+B(A^{\mathsf T})^{\mathsf T}
=AB+BA=0$ in characteristic $2$, since $R$ is commutative.
\end{proof}

\subsection{Exact dimension formula}

\begin{lemma}\label{lem:rank}
For every $A,B\in R$:
\begin{enumerate}[label=(\alph*),leftmargin=1.8em]
\item $\rank H_X=\rank H_Z=\dim_{\F_2}(A,B)$, where $(A,B)\subseteq R$ is the
ideal generated by $A$ and $B$;
\item the CSS dimension is
\begin{equation}
k=2\bigl(\ell m-\dim(A,B)\bigr)=2\dim_{\F_2} R/(A,B)
  =2\dim_{\F_2}\Ann(A,B),
\label{eq:kformula}
\end{equation}
where $\Ann(A,B)=\{f\in R: fA=fB=0\}$; in particular $k$ is always even.
\end{enumerate}
\end{lemma}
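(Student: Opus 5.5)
The plan is to identify the row spaces of $H_X$ and $H_Z$ with concrete subspaces of $R$ (or $R^2$), and then count.

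\textbf{Rank of $H_X$.} Since $M_f^{\mathsf T}=M_{f^{\mathsf T}}$, the row space of $H_X=[M_A\mid M_B]$ is the column space of $[M_A\mid M_B]^{\mathsf T}$. That column space is the image of the map
\[
\varphi:R\to R^2,\qquad g\mapsto (A^{\mathsf T}g,\;B^{\mathsf T}g).
\]
Hence $\rank H_X=\ell m-\dim\ker\varphi=\ell m-\dim\Ann(A^{\mathsf T},B^{\mathsf T})$.

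On the other hand, the column space of $H_X$ is the image of $R^2\to R$, $(u,v)\mapsto Au+Bv$. This image is exactly the ideal $(A,B)$, so the same rank also equals $\dim(A,B)$. Comparing the two expressions gives $\dim(A,B)=\ell m-\dim\Ann(A^{\mathsf T},B^{\mathsf T})$.

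\textbf{Rank of $H_Z$.} The same argument applied to $H_Z=[M_{B^{\mathsf T}}\mid M_{A^{\mathsf T}}]$ shows that its column space is the ideal $(B^{\mathsf T},A^{\mathsf T})$. The map $f\mapsto f^{\mathsf T}$ is a ring automorphism of $R$, since it is induced by the group automorphism $g\mapsto g^{-1}$ of the abelian group $G$. It therefore sends $(A,B)$ isomorphically onto $(A^{\mathsf T},B^{\mathsf T})$, so the two ideals have the same dimension. This gives $\rank H_Z=\dim(A,B)$, which proves (a).

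\textbf{Part (b).} From \eqref{eq:k},
\[
k=2\ell m-2\dim(A,B)=2\dim R/(A,B).
\]
For the annihilator form, the first step gave $\dim R/(A,B)=\dim\Ann(A^{\mathsf T},B^{\mathsf T})$. Applying the transpose automorphism once more, $\Ann(A^{\mathsf T},B^{\mathsf T})=\Ann(A,B)^{\mathsf T}$, so it has the same dimension as $\Ann(A,B)$. Evenness of $k$ is then immediate.

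\textbf{Main obstacle.} The only delicate point is the annihilator identity $\dim R/I=\dim\Ann(I)$ for $I=(A,B)$. The plan above obtains it by computing one rank in two ways, so it needs no Frobenius-algebra theory.

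If one prefers a structural explanation, there is a second route. The pairing $\langle f,g\rangle=$ (coefficient of $1$ in $fg$) is nondegenerate on $R$, because $R\cong\F_2[G]$ is a group algebra. Under this pairing one checks that $\Ann(I)=I^{\perp}$, which gives the same equality. The check goes as follows. Suppose $\langle f,Ih\rangle=0$ for all $h$. Taking $h$ ranging over the monomials $x^{-a}y^{-b}$ shows that every coefficient of $fA$ and of $fB$ vanishes, and conversely.

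I would include this second route as a remark, but base the proof on the rank comparison.
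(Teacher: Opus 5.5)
Your proof is correct, but it obtains the annihilator count by a different route from the paper.

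The paper proceeds in three steps:
\begin{enumerate}
\item It gets $\rank H_X=\dim(A,B)$ from the image of $(u,v)\mapsto Au+Bv$.
\item It gets $\rank H_Z=\rank H_X$ by conjugating with the inversion permutation matrix $S$, via $M_f^{\mathsf T}=SM_fS$.
\item It deduces $\dim\Ann(A,B)=\ell m-\dim(A,B)$ from the Frobenius structure of $\F_2[G]$: the pairing $[1](fg)$ is nondegenerate and $\Ann(I)=I^{\perp}$.
\end{enumerate}

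You instead compute $\rank H_X$ twice. Once it is the dimension of the column space, which is the ideal $(A,B)$. Once it comes from rank--nullity for $g\mapsto(A^{\mathsf T}g,B^{\mathsf T}g)$, whose image is the row space. Row rank $=$ column rank then yields $\dim(A,B)+\dim\Ann(A^{\mathsf T},B^{\mathsf T})=\ell m$. The transpose automorphism turns $\Ann(A^{\mathsf T},B^{\mathsf T})$ into $\Ann(A,B)$. Your $H_Z$ step uses that same automorphism, which is just the paper's $S$ written algebraically.

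Your route is pure linear algebra and needs no Frobenius theory. It also keeps the row space (a subspace of $R^2$) and the column space (an ideal of $R$) apart. The paper loosely calls the image of $R\oplus R\to R$ the ``row space'', which is harmless for ranks; your argument turns exactly that distinction into the key step.

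The paper's route gives the conceptual statement $\Ann(I)=I^{\perp}$ for every ideal. The two arguments are the same fact in different clothing. The ordinary dot product is $u\cdot v=[1](uv^{\mathsf T})$, so the paper's pairing has the transpose built in, whereas you apply it explicitly at the end.
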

\begin{proof}
(a) The row space of $H_X=[A\mid B]$ is the image of the $R$-linear map
$\varphi:R\oplus R\to R$, $(u,v)\mapsto Au+Bv$, whose image is exactly the ideal
$(A,B)$; hence $\rank H_X=\dim(A,B)$. For $H_Z$, let $\sigma:G\to G$,
$g\mapsto -g$, induce the permutation matrix $S$ on $\F_2^{\ell m}$
($S^2=I$); then $M_f^{\mathsf T}=SM_fS$ for all $f\in R$. Therefore
\begin{align*}
\rank H_Z&=\rank[\,B^{\mathsf T}\mid A^{\mathsf T}\,]
 =\rank[\,A^{\mathsf T}\mid B^{\mathsf T}\,]
   &&\text{(block swap)}\\
 &=\rank[\,SAS\mid SBS\,]
 =\rank\bigl(S\,[\,A\mid B\,]\,(S\oplus S)\bigr)\\
 &=\rank[\,A\mid B\,]=\rank H_X ,
\end{align*}
because $S$ and $S\oplus S$ are invertible.

(b) By \eqref{eq:k} and part (a),
$k=2\ell m-2\dim(A,B)=2\dim R/(A,B)$. The group algebra $R=\F_2[G]$ is a
Frobenius algebra: the bilinear form $\langle f,g\rangle=[1](fg)$ (the constant
term of $fg$) is nondegenerate, and for every ideal $I\subseteq R$,
$\Ann(I)=I^{\perp}$, hence $\dim\Ann(I)=\ell m-\dim I$. With $I=(A,B)$,
$k=2\dim\Ann(A,B)$. Finally
$\Ann(A,B)=\ker M_A\cap\ker M_B$, the common kernel of the multiplication
maps.
\end{proof}

\begin{remark}\label{rem:numcheck}
Statements (a) and (b) were verified computationally for all $53$ codes of the
census (Section~\ref{sec:results}): in every case
$\rank H_X=\rank H_Z$ and $k=2(\ell m-\rank H_X)=2\dim(\ker M_A\cap\ker M_B)$;
all observed dimensions are even.
\end{remark}

For odd $\ell,m$ the dimension formula takes an explicit root-counting form.

\begin{corollary}\label{cor:semi}
Let $\ell m$ be odd. Factor
$x^{\ell}-1=\prod_i f_i(x)$ and $y^{m}-1=\prod_j g_j(y)$ into distinct
irreducibles over $\F_2$, with $d_i=\deg f_i$, $e_j=\deg g_j$. Then
\begin{equation}
k \;=\; 2\sum_{(i,j):\; A\equiv 0,\,B\equiv 0 \ (\mathrm{mod}\ f_i,g_j)} d_i e_j
\;=\;2\,\# Z(A,B),
\label{eq:ksemi}
\end{equation}
where $Z(A,B)=\{(\xi,\eta)\in\bar\F_2^2:\xi^{\ell}=\eta^{m}=1,\,
A(\xi,\eta)=B(\xi,\eta)=0\}$ is the set of common zeros of $A,B$ on
$\mu_\ell\times\mu_m$ (counted without multiplicity).
\end{corollary}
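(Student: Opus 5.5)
The plan is to start from Lemma~\ref{lem:rank}(b), which gives $k=2\dim_{\F_2}R/(A,B)$, and to compute this quotient dimension through the Chinese Remainder Theorem. Since $\ell$ is odd, $x^\ell-1$ has derivative $\ell x^{\ell-1}=x^{\ell-1}$, which is coprime to $x^\ell-1$. Hence $x^\ell-1$ is separable and factors into distinct irreducibles $f_i$, and likewise $y^m-1=\prod_j g_j$. CRT then gives
\[
R\cong \F_2[x]/(x^\ell-1)\otimes_{\F_2}\F_2[y]/(y^m-1)\cong\bigoplus_{i,j} K_i\otimes_{\F_2} L_j ,
\]
with $K_i=\F_2[x]/(f_i)\cong\F_{2^{d_i}}$ and $L_j=\F_2[y]/(g_j)\cong\F_{2^{e_j}}$.

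The key structural point is that each $K_i\otimes L_j$ is not a field in general. Since $K_i$ and $L_j$ are finite, hence separable, extensions, we have
\[
K_i\otimes L_j\cong\prod_{t=1}^{\gcd(d_i,e_j)}\F_{2^{\mathrm{lcm}(d_i,e_j)}} ,
\]
which is a product of fields of total dimension $d_ie_j$. So $R$ is a semisimple commutative algebra, a product of fields $E_s$, and each factor corresponds to a Galois orbit of points $(\xi,\eta)\in\mu_\ell\times\mu_m$. Concretely, $R\otimes\bar\F_2\cong\bar\F_2^{\,\mu_\ell\times\mu_m}$ via evaluation $f\mapsto(f(\xi,\eta))$, and this map is an isomorphism by counting: both sides have dimension $\ell m$, and injectivity follows from the distinctness of the roots.

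Once $R$ is written as a product of fields, any ideal $I$ is the product of those factors where its image is nonzero. Therefore $R/(A,B)$ is the product of those $E_s$ in which both $A$ and $B$ vanish. Extending scalars to $\bar\F_2$ preserves dimension, and $\bar\F_2\otimes R/(A,B)$ is the quotient of $\bar\F_2^{\,\mu_\ell\times\mu_m}$ by the ideal generated by the evaluation vectors of $A$ and $B$. That quotient is exactly the coordinate functions supported on the points where $A(\xi,\eta)=B(\xi,\eta)=0$. This gives $\dim R/(A,B)=\#Z(A,B)$, which is the second equality in \eqref{eq:ksemi}.

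The first, weighted, expression follows by reading the sum blockwise. The condition ``$A\equiv B\equiv0 \pmod{f_i,g_j}$'' means that $A$ and $B$ vanish in the whole block $K_i\otimes L_j$, and such a block contributes $d_ie_j$ common zeros.

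The main obstacle is that the first expression as literally stated is only correct when read at the level of the blocks $K_i\otimes L_j$. When $\gcd(d_i,e_j)>1$, the block splits into several fields, and $A$ and $B$ may vanish on some of them but not all. In that situation the true contribution is only a fraction of $d_ie_j$. I would therefore prove the $\#Z(A,B)$ form rigorously, since that is the invariant statement. For the block form, I would either interpret ``$\equiv0$'' as vanishing in the full block $K_i\otimes L_j$ and then add a remark that partially vanishing blocks contribute a proper divisor, or refine the sum to run over the simple factors of $K_i\otimes L_j$, each of degree $\mathrm{lcm}(d_i,e_j)$.
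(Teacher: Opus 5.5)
Your proof of $k=2\,\#Z(A,B)$ is correct, and at the decisive step it differs from the paper's proof and is more accurate. The paper uses the same CRT decomposition $R\cong\prod_{i,j}R_{ij}$, with $R_{ij}=\F_2[x,y]/(f_i,g_j)\cong\F_{2^{d_i}}\otimes\F_{2^{e_j}}$. It then asserts that $(A,B)$ meets each $R_{ij}$ in either $0$ or all of $R_{ij}$, so that $R/(A,B)\cong\bigoplus_{(i,j):A=B=0}R_{ij}$, and counts $d_ie_j$ zeros per surviving block. That treats $R_{ij}$ as a field, which is exactly what fails when $\gcd(d_i,e_j)>1$, as you observe.

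Your route extends scalars, identifies $R\otimes\bar{\F}_2\cong\bar{\F}_2^{\,\mu_\ell\times\mu_m}$ by evaluation, and uses right exactness of $\otimes$ to read off $\dim R/(A,B)$ as the number of points where both evaluation vectors vanish. It never needs the block structure, and the splitting into simple factors $E_s$ is not even required. By contrast, the block sum is in general only a lower bound for $\#Z(A,B)$. Equality holds exactly when $Z(A,B)$ is a union of full blocks $V(f_i)\times V(g_j)$, e.g.\ whenever $\gcd(d_i,e_j)=1$ for all $i,j$.

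You should state this more sharply than ``only correct when read at the level of the blocks''. Under the literal reading $A,B\in(f_i,g_j)$, the first equality in \eqref{eq:ksemi} is false, so no proof of it exists. Take $\ell=m=7$, where $x^7-1=(x+1)(x^3+x+1)(x^3+x^2+1)$, and the weight-$4$ pair $A=(x+y)(1+x^2)=x+y+x^3+x^2y$, $B=(x+y)(1+y^2)=x+y+xy^2+y^3$.
\begin{itemize}
\item Since $A(\xi,\eta)=(\xi+\eta)(1+\xi)^2$ and $B(\xi,\eta)=(\xi+\eta)(1+\eta)^2$, the set $Z(A,B)$ is the diagonal $\{(\xi,\xi):\xi\in\mu_7\}$, so $k=14$.
\item The only block on which $A$ and $B$ both vanish identically is $(x+1,y+1)$, so the block sum gives $k=2$.
\item The missing $12$ comes from the blocks $(x^3+x+1,\,y^3+y+1)$ and $(x^3+x^2+1,\,y^3+y^2+1)$. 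Each is $\cong\F_8^3$, and in each the diagonal occupies exactly one of the three $\F_8$ factors, contributing $3$ common zeros.
\end{itemize}
Your second repair, summing over simple factors (Frobenius orbits) weighted by their degree, is the right one.

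One minor correction: a partially vanishing block contributes $t\cdot\mathrm{lcm}(d_i,e_j)$ with $0<t<\gcd(d_i,e_j)$. That is a proper fraction of $d_ie_j$ but not in general a divisor of it. Under the literal reading such blocks are simply omitted, so the stated sum undercounts.
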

\begin{proof}
For odd $\ell m$ the polynomials $x^\ell-1$, $y^m-1$ are squarefree, so $R$ is
semisimple and the Chinese remainder theorem gives
$R\cong\prod_{i,j}R_{ij}$ with
$R_{ij}=\F_2[x,y]/(f_i,g_j)\cong\F_{2^{d_i}}\otimes\F_{2^{e_j}}$,
$\dim_{\F_2}R_{ij}=d_ie_j$. The ideal $(A,B)$ vanishes exactly on those
components where the images of $A$ and $B$ are both zero, whence
$R/(A,B)\cong\bigoplus_{(i,j):A=B=0}R_{ij}$ and
$k=2\sum d_ie_j$ by \eqref{eq:kformula}. Each such component contributes
exactly $d_ie_j$ common zeros (the $d_i$ roots of $f_i$ paired with the $e_j$
roots of $g_j$), so the sum equals $2\,\#Z(A,B)$.
\end{proof}

\begin{remark}
The searched grids $(12,6)$ and $(8,9)$ have even $\ell m$, so $R$ is
\emph{not} semisimple and \eqref{eq:ksemi} does not apply; the general rank
formula \eqref{eq:kformula} still holds and was used for all exact $k$
computations. Corollary~\ref{cor:semi} is a design tool for odd grids, where
$k$ can be read off from the shared cyclotomic factors of $A$ and $B$ without
linear algebra.
\end{remark}

\subsection{Symmetries and search-space reduction}

\begin{lemma}\label{lem:sym}
Let $A,B\in R$.
\begin{enumerate}[label=(\alph*),leftmargin=1.8em]
\item \textbf{Translation.} For every monomial unit $u=x^sy^t$, the pair
$(uA,uB)$ defines the \emph{identical} CSS code as $(A,B)$:
$\rs(H_X')=\rs(H_X)$ and $\ker(H_Z')=\ker(H_Z)$; in particular
$(k,d_X,d_Z)$ are unchanged.
\item \textbf{Block swap.} The pair $(B,A)$ defines a code obtained from
$QC(A,B)$ by swapping the two blocks of $\ell m$ physical qubits; hence
$(k,d_X,d_Z)$ are unchanged.
\item \textbf{Transpose.} The pair $(A^{\mathsf T},B^{\mathsf T})$ defines a
code with
$d_X(A^{\mathsf T},B^{\mathsf T})=d_Z(A,B)$ and
$d_Z(A^{\mathsf T},B^{\mathsf T})=d_X(A,B)$; in particular $k$ and
$d=\min\{d_X,d_Z\}$ are unchanged.
\end{enumerate}
Consequently the group $\mathfrak{G}$ generated by translations
($\ell m$ elements), swap, and transpose (each of order $2$) has order
$4\ell m$ and acts on pairs $(A,B)$; all pairs in one orbit yield codes with
the same $[[n,k,d]]$.
\end{lemma}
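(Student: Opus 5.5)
We treat each part separately at the level of check matrices, using $M_{fg}=M_fM_g$, $M_f^{\mathsf T}=M_{f^{\mathsf T}}$, and the commutativity of $R$. Then we assemble the group.

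\textbf{(a) Translation.} Since $u=x^sy^t$ is a unit of $R$, $M_u$ is an invertible permutation matrix. For $H_X'=[uA\mid uB]=M_u[A\mid B]$, left multiplication by an invertible matrix preserves the row space, so $\rs(H_X')=\rs(H_X)$. For $H_Z'=[(uA)^{\mathsf T}\mid(uB)^{\mathsf T}]$ we have $(uB)^{\mathsf T}=u^{\mathsf T}B^{\mathsf T}$ and $(uA)^{\mathsf T}=u^{\mathsf T}A^{\mathsf T}$. Hence $H_Z'=M_{u^{\mathsf T}}H_Z$, and with $u^{\mathsf T}=u^{-1}$ also a unit, $\ker H_Z'=\ker H_Z$ and $\rs H_Z'=\rs H_Z$. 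Both pairs $(\rs,\ker)$ entering \eqref{eq:d} therefore coincide, so the code is literally the same and $k,d_X,d_Z$ are unchanged.

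\textbf{(b) Block swap.} Let $P$ be the $2\ell m\times 2\ell m$ permutation exchanging the two blocks. The new matrices are $H_X''=[B\mid A]=H_XP$ and $H_Z''=[A^{\mathsf T}\mid B^{\mathsf T}]=H_ZP$. Then $v\mapsto Pv$ maps $\ker H_Z$ onto $\ker H_Z''$ and $\rs H_X$ onto $\rs H_X''$, and likewise with $X$ and $Z$ exchanged. It preserves weight, so $d_X$ and $d_Z$ are unchanged, and $k$ is unchanged by \eqref{eq:k}.

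\textbf{(c) Transpose.} The new code has $\tilde H_X=[A^{\mathsf T}\mid B^{\mathsf T}]$ and $\tilde H_Z=[B\mid A]$. Up to the block swap $P$ of (b), $\tilde H_Z$ is $H_X$ and $\tilde H_X$ is $H_Z$. Concretely, $\tilde H_X=H_ZP$ and $\tilde H_Z=H_XP$. So $v\mapsto Pv$ sends $\ker H_X\setminus\rs H_Z$ bijectively onto $\ker\tilde H_Z\setminus\rs\tilde H_X$, giving $d_X(A^{\mathsf T},B^{\mathsf T})=d_Z(A,B)$; symmetrically the other equality holds. The quantities $k$ and $\min\{d_X,d_Z\}$ are symmetric, hence unchanged.

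\textbf{Group order.} This is the only step needing care. Define the actions $\tau_u(A,B)=(uA,uB)$, $s(A,B)=(B,A)$ and $t(A,B)=(A^{\mathsf T},B^{\mathsf T})$. Swap commutes with both translations and transpose. Transpose conjugates translations by $t\tau_ut=\tau_{u^{-1}}$, so $\mathfrak G\cong(G\rtimes\Z_2)\times\Z_2$, a quotient of an abstract group of order $4\ell m$. To claim the order is exactly $4\ell m$, I would argue faithfulness on pairs. Take a generic pair, e.g.\ $A=1$, $B=x$ (or any pair with trivial stabilizer when $\ell m>2$). Translation by $u\neq1$ moves $A$, and swap or transpose composed with a translation cannot fix a suitably asymmetric pair. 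I expect this faithfulness check (and small degenerate cases like $\ell m\le 2$) to be the main subtlety. I would state the order as that of the abstract group generated, noting that it acts through this group. Invariance of $[[n,k,d]]$ along orbits follows by composing (a)–(c).
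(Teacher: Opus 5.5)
\textbf{Parts (a)--(c).} Parts (a) and (b) are correct and match the paper's argument. There is one slip in (a): by \eqref{eq:hxhz}, $H_Z'=[(uB)^{\mathsf T}\mid(uA)^{\mathsf T}]$, not $[(uA)^{\mathsf T}\mid(uB)^{\mathsf T}]$. With the correct order, $H_Z'=M_{u^{\mathsf T}}H_Z$ holds as you state.

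Part (c) is correct but takes a different route. You observe that transposing just exchanges the roles of the two check matrices up to the block swap, $\tilde H_X=H_ZP$ and $\tilde H_Z=H_XP$. So $P$ alone carries $\ker H_X\setminus\rs H_Z$ onto $\ker\tilde H_Z\setminus\rs\tilde H_X$, with no ring structure used. The paper instead uses the inversion permutation $\Psi=S\oplus S$ together with $M_f^{\mathsf T}=SM_fS$. That map preserves the $X/Z$ type: it sends $\ker H_X$ to $\ker\tilde H_X$ and $\rs H_Z$ to $\rs\tilde H_Z$. What it directly proves is therefore $d_Z(A,B)=d_Z(A^{\mathsf T},B^{\mathsf T})$. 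The paper's displayed chain obscures this: it writes $\ker H_X'\setminus\rs H_X'$ and labels the result $d_X$, and its equation $[B\mid A]=H_Z\Pi$ should read $H_X\Pi$. Getting from there to the stated $d_X(A^{\mathsf T},B^{\mathsf T})=d_Z(A,B)$ needs either your observation or Lemma~\ref{lem:dxdz}. Your route is thus the more direct proof of (c). Combined with the paper's $\Psi$, it even gives $d_X=d_Z$ for free, since the map $U$ of Lemma~\ref{lem:dxdz} is $P$ composed with $S\oplus S$.

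\textbf{Order of $\mathfrak G$.} The paper's proof says nothing about the order, so your analysis goes beyond it. Your relations are right: $s$ is central and $t\tau_ut=\tau_{u^{-1}}$, so $\mathfrak G$ is a quotient of $(G\rtimes\Z_2)\times\Z_2$. However, the faithfulness sketch fails as written, in two ways.

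First, the pair $(1,x)$ does not have trivial stabilizer. Applying $\tau_{x^{-1}}$, then $t$, then $s$ sends it to $(x^{-1},1)$, then $(x,1)$, then $(1,x)$, so the non-identity element $st\tau_{x^{-1}}$ fixes it.

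Second, the exceptional case is not $\ell m\le 2$. Whenever $\ell,m\le 2$, every element of $G$ is its own inverse, so $t$ acts trivially on $R^2$ and $|\mathfrak G|=2\ell m$. For example, at $(\ell,m)=(2,2)$ the order is $8$, not $16$.

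You do not need a free orbit, only that each non-identity element moves some pair. Write the elements as $s^at^b\tau_u$. Every element other than $\mathrm{id}$ and $t$ moves $(1,0)$: if $a=0$ then $u\neq1$ and $(1,0)$ goes to $(u^{\pm1},0)$; if $a=1$ then $(1,0)$ goes to a pair with first entry $0$. Finally, $t$ moves $(g,0)$ for any $g\in G$ of order $>2$. Hence $|\mathfrak G|=4\ell m$ exactly when $\max(\ell,m)>2$, which covers every grid in the paper. The orbit-invariance of $[[n,k,d]]$ does not depend on any of this and follows from (a)--(c), as you say.
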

\begin{proof}
(a) $M_u$ is an invertible permutation matrix, and
$H_X'=[\,uA\mid uB\,]=M_uH_X$, so $\rs(H_X')=\rs(H_X)$. Since
$M_{(uf)^{\mathsf T}}=(M_uM_f)^{\mathsf T}=M_f^{\mathsf T}M_u^{\mathsf T}
=M_u^{\mathsf T}M_f^{\mathsf T}$ (monomial matrices commute), also
$H_Z'=M_u^{\mathsf T}H_Z$ with $M_u^{\mathsf T}$ invertible, so
$\ker(H_Z')=\ker(H_Z)$. The CSS code is determined by
$(\rs H_X,\rs H_Z)=(\rs H_X',\rs H_Z')$, hence is identical.

(b) $H_X'=H_X\Pi$ and $H_Z'=H_Z\Pi$ for the block-swap permutation $\Pi$ on
$2\ell m$ columns; column permutations preserve weights, kernels, and row
spaces up to relabelling.

(c) With $S$ as in Lemma~\ref{lem:rank}, $H_X'=[\,A^{\mathsf T}\mid
B^{\mathsf T}\,]=S\,H_X\,(S\oplus S)$ and
$H_Z'=[\,B\mid A\,]=H_Z\Pi$. The coordinate permutation
$\Psi=S\oplus S$ maps
$v\in\ker H_X\mapsto v\Psi\in\ker\bigl(H_X(S\oplus S)\bigr)=\ker H_X'$
(using $S^2=I$ and $AS=SA^{\mathsf T}$) and
$w\in\rs H_Z\mapsto w\Psi\in\rs\bigl(H_Z(S\oplus S)\bigr)=\rs H_Z'$
(using $B^{\mathsf T}S=SB$). Since $\Psi$ preserves Hamming weight,
$d_Z(A,B)=\min\wt(\ker H_X\setminus\rs H_Z)
=\min\wt(\ker H_X'\setminus\rs H_X')=d_X(A^{\mathsf T},B^{\mathsf T})$;
the other identity follows symmetrically. $k$ is unchanged by
Lemma~\ref{lem:rank}(a) since $\dim(A,B)=\dim(A^{\mathsf T},B^{\mathsf T})$.
\end{proof}

\begin{lemma}\label{lem:dxdz}
For every bivariate bicycle code $QC(A,B)$, $d_X=d_Z$; equivalently, the $X$
and $Z$ sides carry identical distance spectra.
\end{lemma}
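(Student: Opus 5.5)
Combining parts (b) and (c) of Lemma~\ref{lem:sym} is not enough on its own: they only relate $d_X(A,B)$ to $d_Z(A^{\mathsf T},B^{\mathsf T})$, a different code. I will instead build a single weight-preserving coordinate permutation of $\F_2^{2\ell m}$ that maps $\ker H_Z$ onto $\ker H_X$ and $\rs H_X$ onto $\rs H_Z$. Such a map sends $\ker H_Z\setminus\rs H_X$ bijectively onto $\ker H_X\setminus\rs H_Z$, which gives $d_X=d_Z$ and in fact equality of the full weight enumerators of the two cosets. The natural candidate is the composite of block swap and inversion, $\Phi=(S\oplus S)\Pi$. Concretely, $\Phi$ sends $(u,v)\mapsto(v^{\mathsf T},u^{\mathsf T})$ when vectors are read as pairs of ring elements, using $u\mapsto u^{\mathsf T}$ for the permutation $S$.

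First I check the row spaces in ring language. $\rs H_X=\{(Af,Bf):f\in R\}$ and $\rs H_Z=\{(B^{\mathsf T}g,A^{\mathsf T}g):g\in R\}$. Applying $\Phi$ to $(Af,Bf)$ gives $(B^{\mathsf T}f^{\mathsf T},A^{\mathsf T}f^{\mathsf T})$, because transpose is a ring automorphism of the commutative ring $R$. This lies in $\rs H_Z$ with $g=f^{\mathsf T}$. Since $\Phi$ is a bijection and the two ranks agree by Lemma~\ref{lem:rank}(a), $\Phi(\rs H_X)=\rs H_Z$.

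Next I check the kernels. A vector $(u,v)$ lies in $\ker H_X$ iff $Au+Bv=0$. It lies in $\ker H_Z$ iff $B^{\mathsf T}u+A^{\mathsf T}v=0$; here $H_Z$ acting on a column vector is multiplication by $B^{\mathsf T},A^{\mathsf T}$, and I will double-check these identifications via $M_f^{\mathsf T}=M_{f^{\mathsf T}}$. If $(u,v)\in\ker H_Z$, then transposing the relation gives $Bu^{\mathsf T}+Av^{\mathsf T}=0$, i.e.\ $\Phi(u,v)=(v^{\mathsf T},u^{\mathsf T})\in\ker H_X$. Equal dimensions again upgrade the inclusion to equality. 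Alternatively, the kernel statement follows from the row-space statement by taking orthogonal complements, because $\Phi$ is a permutation and hence orthogonal: $\ker H_Z=(\rs H_Z)^\perp$ maps to $(\rs H_X)^\perp=\ker H_X$ using $\Phi^{-1}=\Phi$.

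The main obstacle is bookkeeping rather than ideas: keeping the conventions straight. These are which of $u$ or $u^{\mathsf T}$ corresponds to which coordinate vector under $S$, whether rows or columns of $M_f$ encode multiplication, and that $\Phi$ is an involution. I would settle this once by verifying $M_f^{\mathsf T}=SM_fS$ on a monomial $f=x^ay^b$, then read everything off in ring form. Since $\Phi$ preserves Hamming weight, the conclusion, including identical distance spectra, follows immediately.
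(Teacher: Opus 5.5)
Your proposal is correct and is essentially the paper's proof: the paper uses the same weight-preserving involution $U(v_1,v_2)=(\sigma v_2,\sigma v_1)$ (block swap composed with inversion), verifies $U(\ker H_Z)=\ker H_X$ and $U(\rs H_Z)=\rs H_X$ by the same ring-level computation, and concludes exactly as you do. As you anticipated, only the conventions need care: if $M_f$ acts on column vectors, then $\rs H_X=\{(A^{\mathsf T}f,B^{\mathsf T}f):f\in R\}$ rather than $\{(Af,Bf)\}$, but the same calculation still yields $\Phi(\rs H_X)=\rs H_Z$, and your orthogonal-complement derivation of the kernel statement is valid under either convention.
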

\begin{proof}
Let $\sigma$ be the inversion involution on $R$ (inducing the permutation
matrix $S$), and define the weight-preserving coordinate permutation
$U(v_1,v_2)=(\sigma v_2,\sigma v_1)$ on $\F_2^{2\ell m}$ (block swap composed
with inversion). Since $R$ is commutative, $\sigma(fg)=\sigma(f)\sigma(g)$ and
$\sigma(f^{\mathsf T})=f$. For $v=(v_1,v_2)\in\F_2^{2\ell m}$,
\[
B^{\mathsf T}v_1+A^{\mathsf T}v_2=0
\iff A\,\sigma(v_2)+B\,\sigma(v_1)
=\sigma\!\bigl(A^{\mathsf T}v_2+B^{\mathsf T}v_1\bigr)=0,
\]
so $U(\ker H_Z)=\ker H_X$. For $w=uH_Z=(uB^{\mathsf T},uA^{\mathsf T})\in
\rs H_Z$, $Uw=(\sigma(u)A,\sigma(u)B)=\sigma(u)H_X\in\rs H_X$, so
$U(\rs H_Z)=\rs H_X$. As $U$ is an involution preserving weights,
$d_X=\min\wt(\ker H_Z\setminus\rs H_X)
=\min\wt\bigl(U(\ker H_Z\setminus\rs H_X)\bigr)
=\min\wt(\ker H_X\setminus\rs H_Z)=d_Z$.
\end{proof}

\begin{remark}
Lemma~\ref{lem:dxdz} upgrades the empirical balance of the census (every one
of the $53$ verified codes has $d_X=d_Z$) to a theorem, and supplies a proof
of the statement ``the code offers equal distance for $X$- and $Z$-type
errors'' used in~\cite{Bravyi2024}. It also halves the verification cost: only
one side needs the exhaustive scan.
\end{remark}

\begin{corollary}\label{cor:filter}
Let $\cP$ be the set of pairs $(A,B)$ with $\wt(A)=w_A$, $\wt(B)=w_B$, not both
$A,B$ zero. Then every $\mathfrak{G}$-orbit on $\cP$ contains a pair in which
at least one of $A,B$ has nonzero constant term. Hence searching only pairs
with $[1]A=1$ or $[1]B=1$ (and skipping pairs with $[1]A=[1]B=0$) loses no
code, and reduces the search space by a factor approaching $\ell m$.
\end{corollary}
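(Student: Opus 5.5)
The plan is to use only the translation part of the symmetry group, Lemma~\ref{lem:sym}(a). By that lemma, $(uA,uB)$ with $u=x^sy^t$ defines the identical CSS code, so in particular it has the same $[[n,k,d]]$. Multiplying by a monomial permutes coefficients, so $\wt(uA)=\wt(A)$ and $\wt(uB)=\wt(B)$. Hence translations preserve $\cP$, and it suffices to find a suitable translate within a single orbit. Swap and transpose are not needed for the existence claim.

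\emph{Step 1 (existence).} Since not both of $A,B$ are zero, assume without loss of generality $A\neq 0$; if instead $A=0$, apply the same argument to $B$. Pick a monomial $x^ay^b$ in the support of $A$ and set $u=x^{-a}y^{-b}$, with exponents read mod $\ell$ and mod $m$. The coefficient of $x^{c}y^{e}$ in $uA$ equals the coefficient of $x^{c+a}y^{e+b}$ in $A$. In particular $[1](uA)=[x^ay^b]A=1$. So $(uA,uB)\in\cP$ lies in the orbit of $(A,B)$ and satisfies $[1](uA)=1$. Combined with Lemma~\ref{lem:sym}, this shows that restricting the search to pairs with $[1]A=1$ or $[1]B=1$ loses no $[[n,k,d]]$, indeed no code.

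\emph{Step 2 (reduction factor).} I will count, for a fixed pair, how many of its $\ell m$ translates satisfy the filter. The translate by $u=x^{-a}y^{-b}$ passes exactly when $x^ay^b\in\supp A\cup\supp B$. So at most $w_A+w_B$ of the $\ell m$ translates pass, and at least one does by Step 1. Translations act freely on pairs with $A\neq 0$: $uA=A$ with $u\neq 1$ would force $\supp A$ to be a union of cosets of $\langle u\rangle$, which happens only for degenerate periodic supports. Summing over orbits, the filtered set therefore has size at most about $(w_A+w_B)/(\ell m)$ times $|\cP|$. With $w_A,w_B$ fixed (here $4$), this is the claimed reduction by a factor approaching $\ell m$, up to the constant $w_A+w_B$ hidden in ``approaching''.

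\emph{Main obstacle.} The only delicate point is making the reduction factor precise. The existence statement is immediate. The factor estimate needs care with non-free orbits (periodic supports) and with reading ``approaching $\ell m$'' as ``$\Theta(\ell m)$ for fixed weights''. I would handle this by bounding the filtered set directly, $|\{(A,B)\in\cP:[1]A=1\text{ or }[1]B=1\}|\le |\cP|\,(w_A/\ell m+w_B/\ell m)$, which follows from a uniform-position argument: a uniformly random weight-$w$ element contains $1$ in its support with probability $w/\ell m$. This avoids orbit-size issues altogether.
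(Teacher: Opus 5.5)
Your Step~1 is exactly the paper's argument: translate by the inverse of a monomial in $\supp A$ (or in $\supp B$ if $A=0$) and invoke Lemma~\ref{lem:sym}(a), so no code is lost.

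Step~2 is where you genuinely differ from the paper, and your version is the more accurate one. The paper justifies the reduction factor only by noting that translation orbits generically have size $\ell m$. That would give a factor $\ell m$ only if the filter kept one representative per orbit. As you observe, it does not: the translate by $x^{-a}y^{-b}$ passes precisely when $x^ay^b\in\supp A\cup\supp B$. Your union bound $|\{(A,B)\in\cP:[1]A=1\text{ or }[1]B=1\}|\le (w_A+w_B)\,|\cP|/\ell m$ is rigorous. It shows the true factor is about $\ell m/(w_A+w_B)$. At $(\ell,m)=(12,6)$, $w=4$ it equals $\binom{72}{4}^2/\bigl(\binom{72}{4}^2-\binom{71}{4}^2\bigr)\approx 9.3$, not $72$.

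So reading ``approaching $\ell m$'' as ``$\Theta(\ell m)$ for fixed weights'' is the only reading under which the claim holds, and the paper's one-line justification overstates it. A factor genuinely near $\ell m$ would need a canonical-representative rule, e.g.\ additionally requiring lexicographic minimality among the passing translates.

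Two small corrections. First, the sentence saying translations act freely on pairs with $A\neq0$ is false as written. For example, $(1+x^6+y^3+x^6y^3,\,1+x^6+y+x^6y)$ on the $(12,6)$ grid is fixed by $x^6$, so drop that sentence. Second, non-free orbits are not actually an obstacle. The set of passing group elements is a union of cosets of the pair's stabilizer, so every orbit retains exactly the fraction $|\supp A\cup\supp B|/\ell m$. That fraction lies between $\max(w_A,w_B)/\ell m$ and $(w_A+w_B)/\ell m$.
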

\begin{proof}
If $A\neq0$, pick $(s,t)\in\supp A$; then $u=x^{-s}y^{-t}$ gives
$[1](uA)=1$. If $A=0$ and $B\neq0$, translate $B$ similarly. The orbit size
under translations alone is (generically) $\ell m$, giving the reduction
factor.
\end{proof}

\begin{remark}
By Lemma~\ref{lem:sym}(a), translation invariance makes the constant-term
filter complete: even the BB code $[[144,12,12]]$, whose generators have no
constant term, has constant-term translates and is thus reachable by the
search, which we use to certify that the pipeline reproduces the benchmark.
\end{remark}

\subsection{Regularity and degeneracy}

\begin{lemma}\label{lem:reg}
Every row and every column of $H_X$ and of $H_Z$ has weight
$\wt(A)+\wt(B)$. Hence $QC(A,B)$ is a QLDPC code with check weight and qubit
degree $\wt(A)+\wt(B)$; in particular weight-$4$ generators give an
$(8,8)$-regular Tanner graph with weight-$8$ stabilizer generators.
\end{lemma}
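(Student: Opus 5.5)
The approach is to reduce everything to one fact about the multiplication matrices $M_f$ from the Preliminaries. The fact is that for $f\in R$, every row and every column of $M_f$ has exactly $\wt(f)$ ones. To see it, I will index the rows and columns of $M_f$ by group elements $g,h\in G=\Z_\ell\times\Z_m$. Since $f=\sum_{s\in\supp f}x^{s}$ (with multi-index notation $x^{(a,b)}=x^ay^b$), the product $f\cdot x^{h}$ equals $\sum_{s\in\supp f}x^{s+h}$. Hence the entry $(M_f)_{g,h}$ is $1$ exactly when $g-h\in\supp f$.

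The key point is that translation by an element of $G$ is a bijection, so the monomials $x^{s+h}$ are pairwise distinct and no cancellation mod $2$ occurs. For a fixed column $h$, the ones therefore sit at the $\wt(f)$ positions $g\in h+\supp f$. For a fixed row $g$, they sit at the $\wt(f)$ positions $h\in g-\supp f$. So $M_f$ is a $G$-circulant (group-permutation) matrix, and both its row and column weights equal $\wt(f)$.

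Next I apply this to the blocks. In $H_X=[\,A\mid B\,]$, each row is the concatenation of a row of $M_A$ and a row of $M_B$, so it has weight $\wt(A)+\wt(B)$. Each column lies entirely inside one block, so its weight is either $\wt(A)$ or $\wt(B)$. This shows the statement's claim about columns needs care: column weight equals $\wt(A)+\wt(B)$ only if one reads ``qubit degree'' as the total number of $X$- and $Z$-checks acting on the qubit.

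For $H_Z=[\,B^{\mathsf T}\mid A^{\mathsf T}\,]$, I use $M_f^{\mathsf T}=M_{f^{\mathsf T}}$ together with $\wt(f^{\mathsf T})=\wt(f)$, since inversion is a bijection on $G$. Its rows then have weight $\wt(A)+\wt(B)$, and a qubit in the first block lies in $\wt(B)$ $Z$-checks and $\wt(A)$ $X$-checks. So each qubit's total Tanner degree is $\wt(A)+\wt(B)$ for every qubit.

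The main obstacle is therefore interpretive rather than technical. The literal statement ``every column of $H_X$ has weight $\wt(A)+\wt(B)$'' holds only in the combined sense, or when $\wt(A)=\wt(B)$ and one counts both matrices. I would write the proof to state the exact per-block column weights and then sum them for the qubit degree. With weight-$4$ generators this gives check weight $8$ and qubit degree $8$ (four $X$-checks plus four $Z$-checks), which is the claimed $(8,8)$-regularity and the QLDPC property, since the weights are constant independent of $\ell m$.
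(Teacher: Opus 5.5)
Your proof is correct and is essentially the paper's argument: each block $M_f$ is a group-circulant with exactly $\wt(f)$ ones in every row and column (your no-cancellation remark makes this explicit), each row of $H_X$ and $H_Z$ concatenates two such rows, and transposition preserves weights. Your caveat is well founded, since the paper's own proof likewise only gives column weights $\wt(A)$ and $\wt(B)$ in the respective blocks of $H_X$. The column claim must therefore be read as the total qubit degree over $H_X$ and $H_Z$ together, which equals $\wt(A)+\wt(B)$ whether or not $\wt(A)=\wt(B)$, and this is the sense in which the Tanner graph is $(8,8)$-regular.
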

\begin{proof}
Rows of $H_X$ are shifts of the concatenated vector $(A,B)$, hence of weight
$\wt(A)+\wt(B)$; each column of the block $A$ (resp.\ $B$) contains one entry
per monomial of $A$ (resp.\ $B$), hence has weight $\wt(A)$ (resp.\ $\wt(B)$).
The same holds for $H_Z$ since transposition preserves weights.
\end{proof}

\begin{corollary}\label{cor:degen}
If $d>\wt(A)+\wt(B)$ then $QC(A,B)$ is degenerate: its stabilizer contains
elements (the check rows) of weight $\wt(A)+\wt(B)<d$. In particular every
weight-$4$ pair with $d\ge 9$ yields a degenerate code.
\end{corollary}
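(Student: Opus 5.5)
The corollary follows directly from Lemma~\ref{lem:reg} together with the definition of degeneracy in the CSS preliminaries. We need to show that if $d>\wt(A)+\wt(B)$, then the stabilizer group contains a nontrivial element of weight strictly less than $d$.

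First, take any row $r$ of $H_X$. By Lemma~\ref{lem:reg} it has weight exactly $w:=\wt(A)+\wt(B)$. This row is an $X$-type stabilizer generator, so it is an element of the stabilizer group.

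Second, we check that $r$ is nontrivial. The hypothesis $d>w$ excludes $A=B=0$: in that case $H_X=H_Z=0$ and $k=2\ell m$, every weight-one vector is a logical operator, and $d=1\le w$ would fail to exceed $w=0$ only if we count carefully. This edge case is cleaner handled by assuming, as in the search, that $\wt(A),\wt(B)\ge1$. Then $w\ge 1$, so $r\neq 0$ and the stabilizer element is not the identity. Its weight satisfies $w<d$, which is exactly the definition of degeneracy. The same argument applies to the $Z$-side using the rows of $H_Z$.

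For the special case, take $\wt(A)=\wt(B)=4$. Then $w=8$, so $d\ge 9$ gives $d>w$ and the code is degenerate. The only point requiring care is the nontriviality and zero-polynomial edge case; the rest is immediate. No reliance on Lemma~\ref{lem:dxdz} is needed, since the argument uses only the overall distance $d=\min\{d_X,d_Z\}$.
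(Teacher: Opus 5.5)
Your argument is correct and is essentially the paper's: the paper gives no separate proof, treating the corollary as immediate from Lemma~\ref{lem:reg}, since each check row is a non-identity stabilizer element of weight $\wt(A)+\wt(B)<d$. One correction to your edge-case remark: the hypothesis $d>\wt(A)+\wt(B)$ does \emph{not} exclude $A=B=0$ (there $\wt(A)+\wt(B)=0$, the stabilizer is trivial, and $d=1$), so your assumption $(A,B)\neq(0,0)$ is genuinely needed, and $\wt(A)+\wt(B)\ge 1$ suffices rather than $\wt(A),\wt(B)\ge 1$.
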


\subsection{Subgroup coset operators: explicit kernel vectors and distance upper bounds}
\label{subsec:subgroup}

The row/column structure of the grid $\Z_\ell\times\Z_m$ produces explicit
low-weight kernel vectors whenever $A$ or $B$ lies in certain augmentation
ideals. This gives rigorous \emph{upper} bounds on $d$ and explains the
distances of most low-distance codes in our census.

For a subgroup $H\le G=\Z_\ell\times\Z_m$ let
$\sigma_H=\sum_{(h_1,h_2)\in H}x^{h_1}y^{h_2}\in R$ be its \emph{norm element},
and let $I_H$ denote the extension to $R$ of the augmentation ideal of
$\F_2[H]$ (generated by $\{1+x^{h_1}y^{h_2}:(h_1,h_2)\in H\}$).

\begin{theorem}\label{thm:coset}
Let $H\le G$ and let $C=g+H$ be any coset of $H$. Denote by $v_C^{(1)}$,
$v_C^{(2)}$ the indicator vectors of $C$ placed in the first resp.\ second
block of $\F_2^{2\ell m}$ (weight $|H|$ each). Then:
\begin{enumerate}[label=(\alph*),leftmargin=1.8em]
\item $v_C^{(1)}\in\ker H_X$ for all cosets $C$ iff $A\sigma_H=0$, iff the
$H$-coset fibers of $\supp A$ are all even, iff $A\in I_H R$;
\item $v_C^{(2)}\in\ker H_X$ for all $C$ iff $B\sigma_H=0$ iff $B\in I_H R$;
\item the same statements with $A,B$ exchanged hold for $\ker H_Z$
(note $\sigma_H^{\mathsf T}=\sigma_H$).
\end{enumerate}
Consequently, if $A\in I_H R$ or $B\in I_H R$ and some (equivalently, every)
coset vector of weight $|H|$ lies outside the stabilizer space, then
$d\le |H|$.
\end{theorem}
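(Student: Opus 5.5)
The plan is to identify coset indicator vectors with ring elements and then read each condition off a single product in $R$. The indicator of $C=g+H$ is the element $g\,\sigma_H\in R$, where $g$ is the monomial $x^{g_1}y^{g_2}$. Under the convention of Lemma~\ref{lem:rank}, $H_X v=A v_1+B v_2$, so $H_X v_C^{(1)}=A\,g\,\sigma_H$. Since $g$ is a unit, this vanishes iff $A\sigma_H=0$. The condition therefore holds for one coset iff it holds for all cosets, which also justifies the ``some (equivalently, every)'' phrasing.

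Next I will prove the three-way equivalence in (a).

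\textbf{Fiber parity.} Write $A=\sum_{c\in G/H}A_c$, where $A_c$ is the part of $A$ supported on the coset $c$. For a monomial $h\in H$ we have $h\sigma_H=\sigma_H$. Hence $A\sigma_H=\sum_c |\supp A\cap c|\,\sigma_{c}$, where $\sigma_c$ is the indicator of $c$. The elements $\sigma_c$ for distinct cosets have disjoint supports, so $A\sigma_H=0$ iff every fiber $|\supp A\cap c|$ is even.

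\textbf{Ideal membership, one direction.} If $A\in I_HR$, then each generator satisfies $(1+h)\sigma_H=\sigma_H+\sigma_H=0$, so $A\sigma_H=0$.

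\textbf{Ideal membership, converse.} Suppose every fiber is even. Pick a representative $g_c$ of each coset $c$. Each monomial $g_ch$ in $A_c$ can be written as $g_c+g_c(1+h)$. Summing over the even number of monomials in $A_c$, the $g_c$ terms cancel. This leaves $A_c=\sum g_c(1+h)\in I_HR$.

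Part (b) is identical with $B$ acting on the second block.

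For (c), use $H_Z v=B^{\mathsf T}v_1+A^{\mathsf T}v_2$. Then $H_Zv_C^{(1)}=B^{\mathsf T}g\sigma_H$, which vanishes iff $B^{\mathsf T}\sigma_H=0$. Applying the transpose, which is a ring automorphism of $R$ since $R$ is commutative, and using $\sigma_H^{\mathsf T}=\sigma_H$ (because $H=-H$), this is equivalent to $B\sigma_H=0$. The second block similarly gives $A\sigma_H=0$.

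For the consequence, say $A\in I_HR$. Then $v_C^{(1)}\in\ker H_X$ by (a). By (c) the vector $v_C^{(2)}$ lies in $\ker H_Z$, since the $A$ condition governs the second block there. If that weight-$|H|$ vector is not in $\rs H_X$, it is a nontrivial $X$-logical, so $d_X\le|H|$. Lemma~\ref{lem:dxdz} then gives $d_Z=d_X$, hence $d\le|H|$. The case $B\in I_HR$ is symmetric, using $v_C^{(1)}\in\ker H_Z$.

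The main point of care is this bookkeeping: which block's coset vector lies in which kernel. The hypothesis ``outside the stabilizer space'' must be applied to the vector that actually lies in the relevant kernel, and the appeal to Lemma~\ref{lem:dxdz} removes any asymmetry between the $X$ and $Z$ sides.

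The ``equivalently, every'' claim for non-membership in the stabilizer space also needs justification. Translation by the unit $g$ multiplies $H_X$ on the left by $M_g$, as in Lemma~\ref{lem:sym}(a), so it preserves $\rs H_X$. It maps $v_{C}$ to $v_{g+C}$. Hence one coset vector lies in $\rs H_X$ iff all of them do.
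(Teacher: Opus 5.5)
Your proof is correct. For the kernel and fiber-parity parts it matches the paper in substance. The paper evaluates each check row against $v_C^{(1)}$ as $|\supp A\cap(g-(i,j)+H)|\bmod 2$ and reads this as a coefficient of $A\sigma_H$. You compress the same computation into the identity $H_Xv_C^{(1)}=Ag\sigma_H$ with $g$ a unit, which also makes the one-coset-versus-all-cosets reduction immediate.

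The genuine difference is the step $A\sigma_H=0\Rightarrow A\in I_HR$. The paper argues by dimension count:
\begin{itemize}
\item $I_HR\subseteq\Ann(\sigma_H)$;
\item $\dim\Ann(\sigma_H)=\ell m-\dim R\sigma_H=\ell m-|G:H|$;
\item $\dim I_HR=\ell m-|G:H|$, via the isomorphism $R/I_HR\cong\F_2[G/H]$, which the paper quotes without proof.
\end{itemize}
You instead write each monomial of a coset as $g_c+g_c(1+h)$ and let the even fiber count cancel the $g_c$ terms. This exhibits $A$ explicitly as an $R$-combination of the generators $1+h$.

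Your route is self-contained and constructive. Combined with $\dim R\sigma_H=|G:H|$, it recovers the quotient dimension that the paper takes as input. The paper's route is shorter once that standard fact is granted.

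You are also more explicit than the paper about which block lands in which kernel in the final bound. Your appeal to Lemma~\ref{lem:dxdz} there is harmless but unnecessary, since $d=\min\{d_X,d_Z\}\le d_X\le|H|$ already.
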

\begin{proof}
For a check row indexed $(i,j)$ and $v=v_C^{(1)}$ with $C=g+H$, the parity is
$\sum_{(p,q)\in\supp A}[(p+i,q+j)\in C]=|\supp A\cap(g-(i,j)+H)|\bmod 2$ ---
the fiber sum of $\supp A$ over an $H$-coset, i.e.\ (up to translation by $g$)
a coefficient of $A\sigma_H$. As $(i,j)$ and $C$ range independently over all
positions and cosets, $v_C^{(1)}\in\ker H_X$ for all $C$ iff all $H$-coset
fibers of $\supp A$ are even, iff $A\sigma_H=0$.
Statement (b) is identical for the second block, and (c) follows from
$H_Z=[\,B^{\mathsf T}\mid A^{\mathsf T}\,]$ together with
$\sigma_H^{\mathsf T}=\sum_{h\in H}(-h)=\sigma_H$. For the annihilator
identification: the elements $g\sigma_H$ ($g$ ranging over coset
representatives) are indicator vectors of the distinct disjoint cosets, hence
linearly independent, so $\dim R\sigma_H=|G:H|$ and
$\dim\Ann(\sigma_H)=\ell m-|G:H|=\dim I_HR$ (since $R/I_HR\cong\F_2[G/H]$);
with $I_HR\subseteq\Ann(\sigma_H)$ (as $(1+h)\sigma_H=0$ for $h\in H$) this
gives $\Ann(\sigma_H)=I_HR$. Finally, kernels and row
spaces of $H_X,H_Z$ are invariant under the diagonal translation action
($H_X(T\oplus T)=T H_X$), so either all $|G:H|$ coset vectors are logical or
none is.
\end{proof}

\begin{corollary}\label{cor:fiber}
Taking $H=\Z_\ell\times\{0\}$ and $H=\{0\}\times\Z_m$:
if $(1+x)\mid A$ or $(1+x)\mid B$ in $R$, then the code has weight-$\ell$
column kernel vectors on both sides, and $d\le\ell$ whenever they are logical;
if $(1+y)\mid A$ or $(1+y)\mid B$, then $d\le m$ whenever the weight-$m$ row
vectors are logical.
\end{corollary}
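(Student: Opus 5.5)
The plan is to specialize Theorem~\ref{thm:coset} to the two coordinate subgroups. Everything then reduces to checking the hypothesis $A\in I_HR$ (or $B\in I_HR$) from the divisibility assumption. The ``both sides'' claim will come from part~(c) of the theorem, or equivalently from Lemma~\ref{lem:dxdz}.

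First take $H=\Z_\ell\times\{0\}$. Then $|H|=\ell$, and $\sigma_H=1+x+\dots+x^{\ell-1}$. The cosets of $H$ are the sets $\{(i,j):0\le i<\ell\}$ for fixed $j$, so $v_C^{(1)},v_C^{(2)}$ are weight-$\ell$ vectors supported on a single $y$-level. The subgroup $H$ is generated by $(1,0)$. Hence its augmentation ideal in $\F_2[H]$ is generated by $1+x$, since $1+x^{h}=(1+x)(1+x+\dots+x^{h-1})$. So $I_HR=(1+x)R$. If $(1+x)\mid A$ in $R$, meaning $A=(1+x)f$ with $f\in R$, then $A\in I_HR$. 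One can also see this directly: $A\sigma_H=f(1+x)\sigma_H=f(1+x^\ell)=0$ in $R$. Theorem~\ref{thm:coset}(a) then puts all $v_C^{(1)}$ in $\ker H_X$. Part~(c), with the roles of $A,B$ exchanged and $\sigma_H^{\mathsf T}=\sigma_H$, puts $v_C^{(2)}$ in $\ker H_Z$, because $A^{\mathsf T}\sigma_H=(A\sigma_H)^{\mathsf T}=0$. This gives weight-$\ell$ kernel vectors on both the $X$ and the $Z$ side. The case $(1+x)\mid B$ is symmetric: it uses part~(b) together with part~(c).

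For the distance bound, the final clause of Theorem~\ref{thm:coset} applies directly. If these coset vectors are logical, meaning not in the relevant stabilizer row space, then $d_X\le\ell$ or $d_Z\le\ell$. By Lemma~\ref{lem:dxdz} we have $d_X=d_Z$, so $d\le\ell$. In fact a single logical coset vector on either side suffices. The translation-invariance remark at the end of the proof of Theorem~\ref{thm:coset} shows that ``some'' and ``every'' coincide here.

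The second case, $H=\{0\}\times\Z_m$, is handled identically with $x$ and $y$ and with $\ell$ and $m$ swapped. Here $\sigma_H=1+y+\dots+y^{m-1}$, $I_HR=(1+y)R$, and the coset vectors are weight-$m$ vectors supported on a single $x$-level. This yields $d\le m$ whenever those vectors are logical.

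The only point needing care is the identification $I_HR=(1+x)R$, that is, that the single generator $1+x$ suffices. The telescoping factorization settles this. Beyond that, the argument is a direct instantiation and I expect no real obstacle.
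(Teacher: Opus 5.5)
Your proposal is correct and takes the same route as the paper: the paper gives this corollary no separate proof and presents it as an immediate specialization of Theorem~\ref{thm:coset} to the two coordinate subgroups, and your telescoping identification $I_HR=(1+x)R$ (resp.\ $(1+y)R$), together with parts (a)--(c) giving kernel vectors on both sides, supplies the details left implicit there. One small simplification: Lemma~\ref{lem:dxdz} is not needed for the final bound, since $d=\min\{d_X,d_Z\}$ already gives $d\le\ell$ from a single logical coset vector on either side, as you note yourself.
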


\begin{remark}\label{rem:designrule}
Theorem~\ref{thm:coset} yields a concrete design rule for high-distance
constructions: choose $A,B$ outside $I_HR$ for every subgroup $H$ of small
order --- in particular avoid $(1+x)$- and $(1+y)$-divisibility, else
$d\le\max(\ell,m)$. The BB benchmark's generators
$A=x^3+y+y^2$, $B=y^3+x+x^2$ have all fiber parities odd, consistent with its
large distance $d=12$.
\end{remark}

\subsection{Exact certification of the minimum distance}

\begin{lemma}\label{lem:cert}
Let $K_X=\ker H_Z$ and $C_X=\rs H_X$. Then $d_X\ge d_0$ if and only if every
$v\in K_X$ with $\wt(v)<d_0$ lies in $C_X$; similarly for $d_Z$ with
$(K_Z,C_Z)=(\ker H_X,\rs H_Z)$.
\end{lemma}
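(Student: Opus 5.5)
The statement is essentially a restatement of the definition \eqref{eq:d}, so the plan is to unfold that definition and argue both implications directly. By \eqref{eq:d}, $d_X=\min\{\wt(v): v\in K_X\setminus C_X\}$. One subtlety must be handled first. By Lemma~\ref{lem:css}, $C_X\subseteq K_X$. If $K_X=C_X$, which by Lemma~\ref{lem:rank} happens exactly when $k=0$, the minimum is over the empty set. I will adopt the usual convention $d_X=\infty$ in that case. The equivalence then holds trivially: $d_X\ge d_0$ is true, and every $v\in K_X$ of any weight lies in $C_X$.

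For the forward direction, assume $d_X\ge d_0$. Take any $v\in K_X$ with $\wt(v)<d_0$. If $v\notin C_X$, then $v$ is admissible in the minimum, so $d_X\le \wt(v)<d_0$. This is a contradiction, hence $v\in C_X$.

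For the reverse direction, assume every $v\in K_X$ with $\wt(v)<d_0$ lies in $C_X$. Let $v^\ast\in K_X\setminus C_X$ attain the minimum, which exists when the set is nonempty because $\F_2^{2\ell m}$ is finite. Then $v^\ast\notin C_X$, so the hypothesis applied contrapositively forces $\wt(v^\ast)\ge d_0$. Hence $d_X=\wt(v^\ast)\ge d_0$.

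The $Z$ statement is the same argument with $(K_Z,C_Z)=(\ker H_X,\rs H_Z)$ and the second formula in \eqref{eq:d}. Alternatively, by Lemma~\ref{lem:dxdz}, $d_Z=d_X$, and the weight-preserving involution $U$ maps $K_Z$ onto $K_X$ and $C_Z$ onto $C_X$. So the two criteria are even equivalent to each other, which justifies checking only one side.

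I expect no real obstacle; the only care needed is the empty-set convention when $k=0$. I will also remark that the useful content of the lemma is computational. One only needs to enumerate vectors of $K_X$ of weight $<d_0$ and test membership in $C_X$ by a rank check, rather than searching over all of $K_X\setminus C_X$.
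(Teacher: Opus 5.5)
Your proof is correct and is the same direct unfolding of definition~\eqref{eq:d} that the paper uses. Your extra care with the empty-set convention when $k=0$ (where $K_X=C_X$ by Lemma~\ref{lem:css} and dimension counting) and with the existence of a minimizer only makes explicit what the paper's two-line proof leaves implicit. Your alternative route to the $Z$-side via the involution $U$ of Lemma~\ref{lem:dxdz} is also valid, though not needed.
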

\begin{proof}
$d_X=\min\wt(K_X\setminus C_X)$ by definition \eqref{eq:d}; if no vector of
weight $<d_0$ lies in $K_X\setminus C_X$, every $v\in K_X\setminus C_X$ has
weight $\ge d_0$. The converse is immediate.
\end{proof}

\section{The Search--Certify--Verify Pipeline}
\label{sec:method}

\subsection{Pipeline}

\begin{definition}\label{def:space}
Fix a grid $(\ell,m)$ and weight $w$. The search space consists of pairs
$(A,B)\in R^2$ with $\wt(A)=\wt(B)=w$ and $[1]A=1$ or $[1]B=1$
(Corollary~\ref{cor:filter}). At $(\ell,m)=(12,6)$ and $w=4$ the
unnormalized pair count is
$\bigl(\binom{71}{4}+\binom{71}{3}\bigr)^2\approx10^{12}$, so exhaustive
enumeration is infeasible and we sample uniformly at random.
\end{definition}

\begin{algorithm}[h]
\caption{Random-search construction with exact certification}
\label{alg:pipeline}
\begin{algorithmic}[1]
\REQUIRE grid $(\ell,m)$, weight $w$, range $[k_{\min},k_{\max}]$, bound $d_0$,
budget $N_p$
\FOR{$t=1,\dots,N_p$}
  \STATE sample a constant-term-normalized pair $A,B$ of weight $w$
  \STATE build the check matrices \eqref{eq:hxhz} and compute
  $k=n-\rank H_X-\rank H_Z$
  \STATE \textbf{skip} unless $k_{\min}\le k\le k_{\max}$
  \STATE certify $d\ge d_0$ via Lemma~\ref{lem:cert}
  \STATE on hit: record $(A,B)$ and $(k,d)$
\ENDFOR
\STATE determine the exact distance of every hit by Algorithm~\ref{alg:dfs}
\end{algorithmic}
\end{algorithm}

\begin{theorem}\label{thm:pipeline}
Algorithm~\ref{alg:pipeline} accepts exactly the sampled normalized pairs with
$k\in[k_{\min},k_{\max}]$ and $d\ge d_0$. If the refinement returns
$d\in\{d_0,d_0+1\}$ then this is the exact minimum distance; if it returns
``$\ge d_0+2$'' then $d\ge d_0+2$ is certified.
\end{theorem}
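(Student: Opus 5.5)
The correctness claim splits into two parts. First, the acceptance criterion follows line by line from Algorithm~\ref{alg:pipeline}. Second, the refinement claim follows from Lemma~\ref{lem:cert}, Lemma~\ref{lem:dxdz}, and the specification of Algorithm~\ref{alg:dfs}.

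\textbf{Acceptance.} For each sampled pair, step 3 computes $k$ exactly by \eqref{eq:k}, via Gaussian elimination over $\F_2$; the value can be cross-checked against Lemma~\ref{lem:rank}. Step 4 therefore discards exactly the pairs with $k\notin[k_{\min},k_{\max}]$.

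Step 5 should be read as the exhaustive test of Lemma~\ref{lem:cert}. It enumerates every $v\in K_X=\ker H_Z$ with $\wt(v)<d_0$ and tests whether $v\in C_X=\rs H_X$ by a rank comparison. By the lemma, the test passes if and only if $d_X\ge d_0$. By Lemma~\ref{lem:dxdz}, $d_X=d_Z$, so $d=d_X$ and this one-sided test is equivalent to $d\ge d_0$.

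Hence a sampled normalized pair is recorded in step 6 exactly when both conditions hold. The normalization itself loses nothing, by Corollary~\ref{cor:filter}, though the claim concerns only sampled pairs.

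\textbf{Refinement.} Apply Lemma~\ref{lem:cert} with the thresholds $d_0+1$ and $d_0+2$. The refinement (Algorithm~\ref{alg:dfs}) is assumed to enumerate exhaustively all kernel vectors of weight $d_0$ and $d_0+1$, in that order, and to return the first weight $w$ at which it finds a vector in $K_X\setminus C_X$.

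Suppose it returns $w=d_0$. A witness of weight $d_0$ exists, so $d\le d_0$; combined with the accepted bound $d\ge d_0$, this gives $d=d_0$.

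Suppose it returns $w=d_0+1$. Then no witness of weight $d_0$ exists, so Lemma~\ref{lem:cert} gives $d\ge d_0+1$. A witness of weight $d_0+1$ gives $d\le d_0+1$, so $d=d_0+1$.

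If neither weight yields a witness, Lemma~\ref{lem:cert} with threshold $d_0+2$ certifies $d\ge d_0+2$. This is reported as a lower bound, not an exact value.

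\textbf{Main obstacle.} The hard step is justifying that the enumeration in step 5 and in Algorithm~\ref{alg:dfs} is genuinely exhaustive over all low-weight kernel vectors, not merely a heuristic. I would establish this from the specification of the DFS/bit-mask verifier, which must visit every support of the given size that is consistent with the parity constraints. Pruning is allowed only when a partial support provably cannot be completed to a kernel vector. Membership in $C_X$ must likewise be decided exactly, by rank computation. Once these two properties of the verifier are stated, the theorem is a formal consequence of the lemmas above.
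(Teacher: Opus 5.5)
Your proposal is correct and follows the paper's proof: the $k$-filter is exact by \eqref{eq:k}, the certification stage is exact by Lemma~\ref{lem:cert}, and the refinement applies the same criterion at weights $d_0$ and $d_0+1$. The only differences are minor. You invoke Lemma~\ref{lem:dxdz} to justify scanning a single side, whereas the paper's proof simply enumerates kernel vectors on both sides, and you write out the case analysis ($w=d_0$ versus $w=d_0+1$) that the paper compresses into one clause.
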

\begin{proof}
The $k$-filter is exact by \eqref{eq:k}. The certification stage is exact by
Lemma~\ref{lem:cert}: it enumerates all kernel vectors of weight $<d_0$ on both
sides and tests stabilizer membership. The refinement applies the same
criterion at weights $d_0$ and $d_0+1$: a found vector gives the exact
distance; otherwise $d\ge d_0+2$.
\end{proof}

\subsection{Exact distance verification by bit-mask DFS}

\begin{algorithm}[h]
\caption{Iterative-deepening DFS for exact $d_X$ (resp.\ $d_Z$)}
\label{alg:dfs}
\begin{algorithmic}[1]
\REQUIRE constraint matrix $M=H_Z$, exclusion basis $P$ of $\rs(H_X)^{\perp}$,
limit $w_{\max}$
\FOR{$w=1,\dots,w_{\max}$}
  \STATE search for $v\in\F_2^n$ with $\wt(v)=w$ and $Mv^{\mathsf T}=0$ by
  constraint-propagation DFS (forced-row assignment, weight and reachability
  pruning)
  \STATE \textbf{return} $w$ upon finding a word with $Pv^{\mathsf T}\neq0$
  \COMMENT{a logical word}
\ENDFOR
\STATE \textbf{return} $w_{\max}+1$ \COMMENT{certified lower bound}
\end{algorithmic}
\end{algorithm}

\begin{theorem}\label{thm:dfs}
Algorithm~\ref{alg:dfs} returns the exact distance $d_X$ if $d_X\le w_{\max}$,
and otherwise certifies $d_X\ge w_{\max}+1$.
\end{theorem}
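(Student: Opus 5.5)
The argument is a soundness-plus-completeness proof for the iterative-deepening loop, reduced to Lemma~\ref{lem:cert} and the definition \eqref{eq:d}. First I fix the meaning of the exclusion test. Since $P$ is a basis (as rows) of $\rs(H_X)^{\perp}$, we have $(\rs(H_X)^{\perp})^{\perp}=\rs(H_X)$ over $\F_2$. Hence for any $v$, the condition $Pv^{\mathsf T}=0$ holds if and only if $v\in\rs H_X=C_X$. So a word $v$ with $Mv^{\mathsf T}=H_Zv^{\mathsf T}=0$ and $Pv^{\mathsf T}\neq0$ is exactly an element of $K_X\setminus C_X$, which is a logical $X$-operator.

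Second, I establish \emph{completeness of the inner search}. At level $w$, the constraint-propagation DFS must enumerate every $v\in\ker H_Z$ of weight exactly $w$, or at least every such $v$ outside $C_X$. I would treat the DFS as a branch-and-bound over coordinate assignments and show that each pruning rule removes only branches with no solutions:
\begin{itemize}
\item \emph{Forced-row assignment} sets a variable whose value is determined by a parity row with exactly one unassigned entry, so it never discards a consistent extension.
\item \emph{Weight pruning} cuts branches whose current weight already exceeds $w$, or whose remaining free positions cannot reach $w$.
\item \emph{Reachability pruning} cuts branches where some row has nonzero partial parity but no unassigned variables left.
\end{itemize}
By induction on the depth of the search tree, every solution of weight $w$ survives to a leaf. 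Since all vectors in $\F_2^n$ are covered by the tree, the level-$w$ search finds a logical word whenever one of weight $w$ exists. Soundness is immediate, because returned words satisfy both tests explicitly.

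Third, I assemble the result. Suppose the algorithm returns $w\le w_{\max}$. By soundness, a logical word of weight $w$ exists, so $d_X\le w$. By completeness at every earlier level $1,\dots,w-1$, no vector of weight $<w$ lies in $K_X\setminus C_X$, so Lemma~\ref{lem:cert} gives $d_X\ge w$; hence $d_X=w$. Now suppose $d_X\le w_{\max}$. Then a logical word of weight $d_X$ exists and the loop stops at or before that level, so the value returned is exactly $d_X$. Otherwise all levels up to $w_{\max}$ fail, and Lemma~\ref{lem:cert} with $d_0=w_{\max}+1$ certifies $d_X\ge w_{\max}+1$. The $d_Z$ case is the same with $M=H_X$ and $P$ a basis of $\rs(H_Z)^{\perp}$. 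By Lemma~\ref{lem:dxdz} it can in any case be skipped.

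The main obstacle is the second step, the completeness of the pruned DFS. It depends on the precise pruning rules, which are only described informally. I would state them as explicit invariants, namely that every surviving partial assignment extends to all weight-$w$ solutions consistent with it, and verify that each rule preserves this invariant. I would flag that correctness of an implementation, as opposed to the idealized algorithm, is what the paper's cross-validation addresses.
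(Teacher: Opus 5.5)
Your proposal is correct and follows essentially the same route as the paper. In both, the exclusion test characterizes $v\notin\rs(H_X)$ via the double orthogonal complement, and the forced assignments and the overweight and capacity prunes are shown to discard no weight-$w$ solution of $H_Zv^{\mathsf T}=0$. Iterative deepening combined with Lemma~\ref{lem:cert} then gives either the exact value of $d_X$ or the certified bound $d_X\ge w_{\max}+1$.

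The only differences are cosmetic. The paper also explicitly justifies its branching rule for rows with two undecided coordinates (branching over both solutions of the row equation), which your exhaustive-subtree invariant already covers. You read ``reachability'' as a parity-infeasibility prune, while the paper's proof only lists capacity and overweight prunes; each of these is sound either way.
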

\begin{proof}
The propagation rules only assign values forced by the parity equations
(rows with one undecided coordinate) or branch over \emph{all} solutions of the
row equation (rows with two undecided coordinates: $v_{c_1}\oplus v_{c_2}=0$
gives $(0,0),(1,1)$; parity $1$ gives $(0,1),(1,0)$); the branching rule
splits on a free coordinate. Hence the search space of weight-$w$ solutions of
$Mv^{\mathsf T}=0$ is exhausted. The prunes are sound: the capacity prune
($\wt+\#\text{undecided}<w$) and the overweight prune cut only branches that
cannot contain a weight-$w$ solution. The exclusion test
$Pv^{\mathsf T}\neq0$ characterizes $v\notin\rs(H_X)$, since
$\rs(H_X)=(\ker H_X)^{\perp}$. Iterative deepening returns the smallest $w$
with a logical word, which is $d_X$ by Lemma~\ref{lem:cert}.
\end{proof}

\begin{remark}\label{rem:validation}
Two independent implementations of Algorithm~\ref{alg:dfs} were used: a
bit-mask Python verifier and the reference Magma backtracking search. They were
cross-validated as follows: (i) on the whole census, Magma's exact
\textsc{Words} enumeration at weights $\le 7$ agrees with the verifier;
(ii) at weight $8$ the Magma reference search agrees with the verifier on
spot-checked codes; (iii) the verifier reproduces the published distance
$d=12$ of the BB benchmark $[[144,12,12]]$ exactly (no logical operator of
weight $\le 11$, a weight-$12$ logical operator found); (iv) for every exact
distance reported, the exhibited minimum-weight logical word was independently
re-checked in Magma (as a genuine logical operator).
\end{remark}

\subsection{Translation-symmetry pruning}
\label{subsec:sym}

The translation invariance of Lemma~\ref{lem:sym}(a) yields a rigorous
reduction of the search space for the exact verifier, which is essential for
pushing the scan to weights $\ge 11$.

\begin{theorem}\label{thm:anchor}
Let $G=\Z_\ell\times\Z_m$ act on $\F_2^{2\ell m}$ by simultaneous translation
of the two blocks. Then $\ker H_Z$, $\rs H_X$ are $G$-invariant, and for any
$w\ge1$ the following are equivalent: (i) there is a logical word of weight
$w$; (ii) there is a logical word of weight $w$ with a $1$ at coordinate $0$;
(iii') or one supported in the second block with a $1$ at coordinate $\ell m$.
Hence an exhaustive weight-$w$ search reduces to two anchored DFS runs, cutting
the search tree by a factor up to $2\ell m$.
\end{theorem}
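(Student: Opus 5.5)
The plan is to derive everything from commutativity of $R$, exactly as in Lemma~\ref{lem:sym}(a) and the last step of the proof of Theorem~\ref{thm:coset}. I read (ii)/(iii') as one disjunctive condition: a logical word of weight $w$ exists either with a $1$ at coordinate $0$, or with empty first block and a $1$ at coordinate $\ell m$.

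First I would establish $G$-invariance. For $u=x^sy^t$, let $T_u=M_u\oplus M_u$ be the simultaneous translation of both blocks. It is a permutation matrix, so it preserves Hamming weight. Because monomial matrices commute with every $M_f$ and with every $M_f^{\mathsf T}=M_{f^{\mathsf T}}$, we have
\[
H_X T_u=[\,AM_u\mid BM_u\,]=M_uH_X,\qquad H_ZT_u=[\,B^{\mathsf T}M_u\mid A^{\mathsf T}M_u\,]=M_uH_Z .
\]
If $H_Zv^{\mathsf T}=0$, then $H_Z(T_uv^{\mathsf T})=M_uH_Zv^{\mathsf T}=0$. Hence $T_u$ maps $\ker H_Z$ into itself, and since it is a bijection it maps $\ker H_Z$ onto itself. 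For the row space, a row combination $cH_X$ is sent to $cH_XT_u^{\mathsf T}=cH_XT_{u^{-1}}=(cM_{u^{-1}})H_X\in\rs H_X$, so $\rs H_X$ is also invariant (the same argument with $u^{-1}$ gives equality). Consequently $T_u$ restricts to a weight-preserving bijection of $K_X\setminus C_X$, the set of logical words from Lemma~\ref{lem:cert}. The same computation applies on the $Z$ side, with $\ker H_X$ and $\rs H_Z$.

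Next I would prove the equivalence. The implications from (ii) or (iii') to (i) are trivial. For the converse, take a logical word $v=(v_1,v_2)$ of weight $w\ge1$.
\begin{enumerate}[label=(\roman*),leftmargin=2em]
\item If $v_1\neq0$, choose $(s,t)\in\supp v_1$ and apply $T_u$ with $u=x^{-s}y^{-t}$. The image is a logical word of weight $w$ whose first block contains the identity element, that is, a $1$ at coordinate $0$.
\item If $v_1=0$, then $v_2\neq0$ because $w\ge1$. The same translation moves a support point of $v_2$ to the identity of the second block, which is coordinate $\ell m$, and the first block stays zero.
\end{enumerate}
This is essentially the argument of Corollary~\ref{cor:filter}, transplanted from generator pairs to codewords.

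For the complexity claim, the exhaustive weight-$w$ search then needs only two runs of Algorithm~\ref{alg:dfs}. Run~(a) fixes $v_0=1$. Run~(b) fixes $v_1=0$ and $v_{\ell m}=1$. Since each orbit of logical words is met, soundness follows from Theorem~\ref{thm:dfs}. To justify the factor ``up to $2\ell m$'', I would count choices of the first $1$ in a branching order. An unanchored search may place its first support coordinate at any of $2\ell m$ positions, whereas the anchored runs fix that coordinate. Because of the ``up to'', it suffices to give this heuristic count rather than an exact bound.

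The only delicate point I expect is the bookkeeping of row versus column conventions in the row-space invariance, namely $T_u^{\mathsf T}=T_{u^{-1}}$, together with making sure that case~(ii) genuinely needs the anchor in the second block rather than producing a word with a $1$ at coordinate $0$. I would handle the first by writing every vector as a row vector throughout, and the second by the case split above.
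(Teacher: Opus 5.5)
Your proof is correct and follows the paper's argument: the relations $H_XT_u=M_uH_X$ and $H_ZT_u=M_uH_Z$ give $G$-invariance of $\ker H_Z$ and $\rs H_X$, and the case split on whether the first block is nonempty translates a support point to coordinate $0$ or to coordinate $\ell m$. Your version is more explicit than the paper's about the row-space bookkeeping ($T_u^{\mathsf T}=T_{u^{-1}}$). One small notational point: avoid writing $v_1=0$ for the whole first block right after using $v_0$ to denote a single coordinate.
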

\begin{proof}
$G$-invariance of $\ker H_Z$ and $\rs H_X$ follows from the cyclic structure:
$H_X(T\oplus T)=TH_X$ with $T$ invertible, and the rows of $H_X,H_Z$ are closed
under translation. Any nonzero logical word has a translate supported at
coordinate $0$ (if its first block is nonempty) or at coordinate $\ell m$
(otherwise); translation preserves weight, kernel membership, and stabilizer
membership. The anchored runs are therefore complete.
\end{proof}

\begin{remark}
The anchoring of Theorem~\ref{thm:anchor} is a pure exactness-preserving
device: it prunes the search space and order but leaves the set of certifiable
distances unchanged, which is what makes the deep scans of the census feasible
without compromising exactness.
\end{remark}

\section{Computational Results}
\label{sec:results}

\subsection{Setup and census}
\label{subsec:setup}

We ran the pipeline with $w=4$ (weight-$8$ checks) on the grids
$(\ell,m)\in\{(12,6),(8,9)\}$ ($n=144$ both), $k\in[4,40]$, $d_0=6$,
$5\times10^4$ sampled pairs per grid. $53$ distinct $(k,d)$ candidates were
recorded and \emph{all} were verified with the bit-mask verifier
(scan limit $10$, with deep scans at limit $12$--$14$ for the strongest
candidates): $50$ codes have \emph{exact} distances, and $3$ codes carry
certified lower bounds (two with $d\ge 11$, one with $d\ge 15$). The census
distance distribution is $\{6^{(23)},7^{(8)},8^{(9)},9^{(3)},10^{(2)},
12^{(5)},(\ge11)^{(2)},(\ge15)^{(1)}\}$. Table~\ref{tab:codes} lists the top
codes ranked by $kd^2/n$; the full census is summarized in
Section~\ref{subsec:obs}.

\begin{theorem}
\label{thm:codes}
The pairs $(A,B)$ of Table~\ref{tab:codes} define CSS bivariate bicycle codes
with the stated parameters $[[144,k,d]]$ (distances exact unless marked
$\ge$, in which case the value is a certified lower bound;
Theorem~\ref{thm:dfs} and Remark~\ref{rem:validation}). In particular there
exist weight-$8$-check BB codes
\begin{gather*}
[[144,16,10]],\; [[144,10,12]],\; [[144,14,10]],\; [[144,16,9]],\;
[[144,20,8]],\\
[[144,18,8]],\; [[144,8,12]],\;
[[144,4,12]]\ \text{(two grids)},\ \text{and}\ [[144,6,d\ge 15]],
\end{gather*}
with $kd^2/n=11.11,\,10.00,\,9.72,\,9.00,\,8.89,\,8.00,\,8.00,\,4.00$
respectively, the last code's certified distance \emph{exceeding} the BB
benchmark distance $12$.
\end{theorem}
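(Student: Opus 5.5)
The statement is a certificate-style claim about finitely many explicit pairs $(A,B)$, so the proof is a verification assembled from the structural results above rather than a new argument. I will treat each row of Table~\ref{tab:codes} separately and establish three things: that the pair defines a CSS code of length $n=2\ell m=144$; that $k$ equals the stated value; and that $d$ equals, or is at least, the stated value.

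\textbf{The CSS property and $k$.} The CSS property is immediate from Lemma~\ref{lem:css}, and Lemma~\ref{lem:reg} gives weight-$8$ checks since $\wt(A)=\wt(B)=4$. For $k$ I invoke Lemma~\ref{lem:rank}(b). One computes $\rank H_X$ over $\F_2$ by Gaussian elimination on the $72\times144$ matrix $[A\mid B]$, and Lemma~\ref{lem:rank}(a) then gives $k=2(72-\rank H_X)$. As a cross-check, $\dim(\ker M_A\cap\ker M_B)$ should equal $k/2$. This step is exact, finite linear algebra.

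\textbf{The distance.} By Lemma~\ref{lem:dxdz}, $d=d_X$, so it suffices to treat one side. I run Algorithm~\ref{alg:dfs} with $M=H_Z$ and exclusion basis $P$ of $\rs(H_X)^\perp=\ker H_X$, anchored as in Theorem~\ref{thm:anchor} so that only two DFS runs per weight are needed.

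\emph{Exact distances.} For a code reported with exact $d$, Theorem~\ref{thm:dfs} gives two things. First, exhaustion at all weights $w<d$ with no logical word found certifies $d\ge d_{\text{stated}}$ via Lemma~\ref{lem:cert}. Second, the returned weight-$d$ word $v$ is checked directly: $H_Zv^{\mathsf T}=0$ and $Pv^{\mathsf T}\neq0$. This gives the matching upper bound, independently of the search's completeness.

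\emph{Lower bounds.} For the entries marked $\ge$, only the exhaustion half is claimed. For $[[144,6,d\ge15]]$ this means completing the scan through weight $14$ with $w_{\max}=14$.

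\emph{Coset sanity check.} Before running the search, I check the hypotheses of Theorem~\ref{thm:coset} on each pair as a consistency test. For example, the pair for $[[144,6,d\ge 15]]$ must lie outside $I_HR$ for every subgroup $H$ with $|H|<15$. Otherwise the coset vectors would be logical and would give $d\le|H|$, contradicting the claim, unless those vectors lie in the stabilizer space.

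\textbf{Figures of merit and main obstacle.} The listed values of $kd^2/n$ are arithmetic from the established $(k,d)$; for the $d\ge15$ code no ratio is needed beyond the bound. The main obstacle is the completeness of the deep exhaustive scan at weights $11$ to $14$, especially for the $d\ge15$ code, where the search tree is enormous. Here I rely on soundness of the pruning rules as in Theorem~\ref{thm:dfs}, and on exactness-preserving anchoring as in Theorem~\ref{thm:anchor}. I also use the independent cross-validation in Remark~\ref{rem:validation} (the Magma agreement, and reproducing $d=12$ for $[[144,12,12]]$) to guard against implementation error. The certificate for this step is computational rather than humanly checkable, so the proof reduces to the correctness of Algorithm~\ref{alg:dfs} together with the recorded runs.
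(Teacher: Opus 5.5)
Your proposal is correct and is essentially the paper's own argument: the theorem is a computational certificate resting on the exact rank formula of Lemma~\ref{lem:rank} for $k$, on the exhaustive translation-anchored DFS of Theorem~\ref{thm:dfs}, run on one side by Lemma~\ref{lem:dxdz}, for the lower bounds, and on exhibited minimum-weight logical words together with the cross-validation of Remark~\ref{rem:validation} for the matching upper bounds. Your coset consistency check via Theorem~\ref{thm:coset} is a harmless extra that the paper does not use; it does pass for the $[[144,6,d\ge 15]]$ pair, since over every subgroup of $\Z_8\times\Z_9$ of order below $15$ neither $A$ nor $B$ has all coset fibers even.
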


\begin{table}[h]
\caption{Top codes at $n=144$ with weight-$8$ checks ($w_A=w_B=4$), ranked by
$kd^2/n$. Distances are exact unless marked $\ge$ (certified lower bound).
Every code was additionally cross-validated (Remark~\ref{rem:validation}).}
\label{tab:codes}
\centering
\scriptsize
\setlength{\tabcolsep}{4pt}
\begin{tabular}{clccclll}
\toprule
\# & Code & $(\ell,m)$ & $d_X$ & $d_Z$ & $kd^2/n$ & $A$ & $B$ \\
\midrule
1 & $[[144,16,10]]$ & $(12,6)$ & 10 & 10 & 11.1111 & $1 + x^{11} + y^{5} + x^{9}y^{5}$ & $x^{3}y^{2} + x^{8}y^{2} + x^{4}y^{3} + x^{9}y^{3}$ \\
2 & $[[144,10,12]]$ & $(8,9)$ & 12 & 12 & 10.0000 & $x^{4}y^{2} + xy^{3} + y^{5} + x^{5}y^{8}$ & $1 + x^{4}y^{2} + x^{2}y^{3} + x^{6}y^{5}$ \\
3 & $[[144,14,10]]$ & $(12,6)$ & 10 & 10 & 9.7222 & $x^{4} + x^{10}y^{2} + x^{3}y^{4} + x^{11}y^{4}$ & $1 + x^{7}y^{2} + x^{7}y^{3} + x^{10}y^{3}$ \\
4 & $[[144,16,9]]$ & $(8,9)$ & 9 & 9 & 9.0000 & $1 + y^{5} + x^{4}y^{7} + x^{4}y^{8}$ & $1 + xy^{4} + y^{7} + xy^{7}$ \\
5 & $[[144,20,8]]$ & $(12,6)$ & 8 & 8 & 8.8889 & $1 + x^{7}y + x^{5}y^{3} + y^{4}$ & $1 + x^{2} + xy + y^{4}$ \\
6 & $[[144,10,\ge 11]]$ & $(12,6)$ & $\ge$11 & $\ge$11 & 8.4028 & $1 + x^{8}y + xy^{3} + x^{2}y^{5}$ & $1 + x^{11} + xy + y^{4}$ \\
7 & $[[144,18,8]]$ & $(12,6)$ & 8 & 8 & 8.0000 & $x^{2}y^{3} + x^{8}y^{3} + x^{4}y^{4} + x^{10}y^{5}$ & $1 + xy + x^{6}y^{5} + x^{7}y^{5}$ \\
8 & $[[144,8,12]]$ & $(8,9)$ & 12 & 12 & 8.0000 & $1 + x^{2}y^{4} + x^{3}y^{4} + x^{7}y^{6}$ & $1 + x^{2}y^{2} + x^{3}y^{5} + xy^{6}$ \\
9 & $[[144,6,d\!\ge\!15]]$ & $(8,9)$ & $\ge$15 & $\ge$15 & $\ge$9.3750 & $1 + x^{4}y^{2} + y^{4} + y^{7}$ & $1 + x^{2}y^{6} + x^{5}y^{7} + x^{7}y^{8}$ \\
10 & $[[144,14,8]]$ & $(8,9)$ & 8 & 8 & 6.2222 & $1 + xy + x^{2}y^{3} + x^{3}y^{7}$ & $x^{2} + x^{5}y^{2} + x^{4}y^{6} + x^{7}y^{8}$ \\
11 & $[[144,14,8]]$ & $(8,9)$ & 8 & 8 & 6.2222 & $1 + x^{7} + x^{3}y^{4} + x^{4}y^{4}$ & $1 + x^{7}y^{4} + x^{7}y^{6} + y^{7}$ \\
12 & $[[144,24,6]]$ & $(12,6)$ & 6 & 6 & 6.0000 & $xy + x^{9}y^{2} + x^{7}y^{3} + x^{7}y^{4}$ & $1 + x^{4}y^{4} + x^{5}y^{4} + x^{3}y^{5}$ \\
13 & $[[144,6,12]]$ & $(12,6)$ & 12 & 12 & 6.0000 & $x^{10} + xy^{3} + x^{6}y^{5} + x^{10}y^{5}$ & $1 + x^{2}y + x^{10}y^{3} + x^{9}y^{4}$ \\
14 & $[[144,10,9]]$ & $(8,9)$ & 9 & 9 & 5.6250 & $y + x^{4}y^{2} + x^{4}y^{6} + y^{8}$ & $1 + x^{7}y^{4} + x^{3}y^{5} + x^{4}y^{5}$ \\
15 & $[[144,22,6]]$ & $(12,6)$ & 6 & 6 & 5.5000 & $1 + x^{3} + x^{6}y + x^{9}y^{3}$ & $1 + y + x^{2}y + x^{2}y^{2}$ \\
16 & $[[144,12,8]]$ & $(8,9)$ & 8 & 8 & 5.3333 & $1 + xy^{3} + x^{3}y^{3} + x^{6}y^{6}$ & $x^{7} + xy + x^{7}y + x^{3}y^{3}$ \\
17 & $[[144,4,12]]$ & $(12,6)$ & 12 & 12 & 4.0000 & $1 + x^{4}y^{3} + x^{9}y^{3} + x^{3}y^{4}$ & $1 + x^{3}y + x^{3}y^{3} + x^{8}y^{3}$ \\
18 & $[[144,4,12]]$ & $(8,9)$ & 12 & 12 & 4.0000 & $1 + x^{6} + x^{5}y^{2} + x^{5}y^{7}$ & $1 + x^{5}y^{4} + x^{2}y^{7} + x^{5}y^{8}$ \\
\bottomrule
\end{tabular}
\end{table}

\begin{corollary}\label{cor:compare}
At the same length $n=144$: the $[[144,16,10]]$ code encodes $4$ more logical
qubits than the BB benchmark $[[144,12,12]]$ (rate $1/9$ vs.\ $1/12$, a $33\%$
increase) while its $kd^2/n=11.11$ stays within $7.4\%$ of the benchmark value
$12$; the $[[144,10,12]]$ code attains the benchmark distance $d=12$ at
weight-$8$ checks; and $[[144,6,d\ge 15]]$ strictly exceeds the benchmark
distance.
\end{corollary}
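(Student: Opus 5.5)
The statement is Corollary~\ref{cor:compare}, which compares three specific rows of Table~\ref{tab:codes} with the benchmark $[[144,12,12]]$. Its proof is bookkeeping on top of Theorem~\ref{thm:codes}. I would take the parameters of rows~1, 2 and~9 as established by that theorem, together with the benchmark parameters $[[144,12,12]]$ from~\cite{Bravyi2024}. The benchmark's distance is also reproduced independently by the verifier, by Remark~\ref{rem:validation}(iii). Each of the three claims then reduces to elementary arithmetic or an inequality between certified quantities.

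\textbf{Claim for row~1.} For $[[144,16,10]]$ I would compute:
\begin{itemize}
\item the extra logical qubits, $16-12=4$;
\item the two rates, $16/144=1/9$ and $12/144=1/12$, whose ratio is $(1/9)/(1/12)=4/3$, i.e.\ a $33\%$ increase;
\item the figure of merit, $kd^2/n=16\cdot100/144=100/9\approx 11.11$;
\item the relative gap to the benchmark value $12\cdot 144/144=12$, namely $(12-100/9)/12=2/27\approx 7.4\%$.
\end{itemize}
Here $k=16$ rests on the exact rank formula of Lemma~\ref{lem:rank}, and $d=10$ is exact because the verifier found a weight-$10$ logical word and none of smaller weight (Theorem~\ref{thm:dfs}). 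By Lemma~\ref{lem:dxdz} it is enough that this holds on one side.

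\textbf{Claim for row~2.} For $[[144,10,12]]$, the table lists $d_X=d_Z=12$ exactly, and this is the benchmark distance. Lemma~\ref{lem:reg} with $\wt(A)=\wt(B)=4$ shows the checks have weight $8$.

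\textbf{Claim for row~9.} For $[[144,6,d\ge15]]$, the verifier certifies $d\ge15$ as a lower bound via Theorem~\ref{thm:dfs}. This means the exhaustive scan found no logical word of weight $\le 14$, with completeness guaranteed by the anchoring of Theorem~\ref{thm:anchor}. Since $d\ge 15>12$, the exceedance is strict. Only a lower bound is needed here, so the open exact value does not matter.

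\textbf{Main obstacle.} The only nontrivial ingredient is the trustworthiness of the computational certificates, above all the $d\ge15$ scan. I would rely on Theorem~\ref{thm:dfs}, Theorem~\ref{thm:anchor} and the cross-validation in Remark~\ref{rem:validation} rather than re-derive them. A useful sanity check for row~9 is to confirm that no small subgroup coset bound from Theorem~\ref{thm:coset} contradicts $d\ge15$. Concretely, this means checking that neither $A$ nor $B$ lies in $I_HR$ for any subgroup $H$ with $|H|\le 14$ whose coset vectors are logical.
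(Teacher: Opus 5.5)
Your proposal is correct and follows the same route as the paper, which gives no separate proof: the corollary is read off from Theorem~\ref{thm:codes} (rows 1, 2 and 9 of Table~\ref{tab:codes}) with exactly your arithmetic, namely $16/144=1/9$, $kd^2/n=100/9\approx 11.11$, relative gap $2/27\approx 7.4\%$ (strictly $7.41\%$, so ``within $7.4\%$'' is a rounding), and $15>12$. Your coset-bound sanity check is a harmless consistency test but plays no role in the argument.
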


\subsection{Empirical observations on the census}
\label{subsec:obs}

A few empirical patterns deserve note. First, the census confirms the
structural theory: every code has even dimension (Lemma~\ref{lem:rank}) and
$d_X=d_Z$ (Lemma~\ref{lem:dxdz}), and the $13$ codes with $d\ge 9$ are
necessarily degenerate (Corollary~\ref{cor:degen}), their stabilizers
containing weight-$8$ elements; conversely some codes possess stabilizer
elements of weight as low as $2$. Second, $k$ and $d$ are negatively
correlated throughout: $d\ge 10$ occurs only for $k\le 16$, while $k\ge 18$
forces $d\le 8$, and the Pareto frontier in the $(k,d)$-plane is
$\{[[144,6,\ge15]],\,[[144,10,12]],\,[[144,16,10]],\,[[144,20,8]],\,
[[144,24,6]]\}$. Third, the subgroup-coset bounds of Theorem~\ref{thm:coset}
are frequently tight: $[[144,16,9]]$ and $[[144,10,9]]$ on $(8,9)$ satisfy
$(1+y)\mid A,B$ and have full rows (weight $m=9$) as minimum logical operators,
so $d=9=m$ exactly; $[[144,14,8]]$ on $(8,9)$ attains $d=8=\ell$; and codes
avoiding all small-subgroup augmentation ideals (e.g.\ $[[144,16,10]]$,
$[[144,10,12]]$, $[[144,20,8]]$) exceed the row/column bounds, in line with
Remark~\ref{rem:designrule}. Finally, at the same $n=144$, the $(12,6)$ grid
yields the largest dimensions ($k\le 24$) and the two $d=10$ codes, while the
$(8,9)$ grid yields the $d=12$ and $d\ge 15$ codes, consistent with the
dimension formulae \eqref{eq:kformula}, \eqref{eq:ksemi}.

\subsection{Extension to $n=72$}
\label{subsec:n72results}

To demonstrate the scalability of the construction beyond a single code length,
we ran the same pipeline on the $(6,6)$ grid ($n=72$, $2\times10^4$ sampled
pairs), obtaining $14$ candidates, all exactly verified with the verifier
verifier (scan limit $14$). Table~\ref{tab:n72} lists the best of them. The top
code $[[72,14,8]]$ achieves $kd^2/n=14\cdot64/72=12.44$, \emph{more than twice}
the same-length BB code $[[72,12,6]]$ ($kd^2/n=6$, whose distance is reported
in~\cite{Bravyi2024} as an upper bound $d\le 6$); and $[[72,4,10]]$ reaches
$d=10$.

\begin{table}[h]
\caption{Best $n=72$ codes (grid $(6,6)$, column weight $8$), distances exact
(verified as in Remark~\ref{rem:validation}). BB reference $[[72,12,6]]$ has
$kd^2/n=6$.}
\label{tab:n72}
\centering
\small
\begin{tabular}{clcclll}
\toprule
\# & Code & $kd^2/n$ & $A$ & $B$ \\
\midrule
1 & $[[72,14,8]]$ & 12.4444 & $1 + x^{4}y^{4} + y^{5} + xy^{5}$ & $1 + x^{4}y + xy^{2} + x^{3}y^{2}$ \\
2 & $[[72,10,8]]$ & 8.8889 & $1 + x^{2}y^{3} + x^{5}y^{3} + x^{2}y^{5}$ & $1 + x^{5}y + y^{2} + y^{3}$ \\
3 & $[[72,16,6]]$ & 8.0000 & $1 + x^{4}y^{2} + x^{4}y^{4} + x^{3}y^{5}$ & $x^{4} + xy + x^{3}y^{2} + x^{3}y^{4}$ \\
4 & $[[72,8,8]]$ & 7.1111 & $1 + x + x^{4} + x^{2}y$ & $1 + x^{4}y + y^{3} + x^{3}y^{5}$ \\
5 & $[[72,14,6]]$ & 7.0000 & $1 + x^{3} + xy^{2} + x^{4}y^{4}$ & $1 + x^{4}y + x^{3}y^{4} + xy^{5}$ \\
6 & $[[72,12,6]]$ & 6.0000 & $1 + x^{3}y + xy^{5} + x^{2}y^{5}$ & $1 + x^{3}y + x^{5}y^{3} + x^{2}y^{5}$ \\
7 & $[[72,4,10]]$ & 5.5556 & $1 + x^{2} + x^{5}y^{4} + x^{5}y^{5}$ & $1 + xy^{2} + xy^{4} + x^{4}y^{5}$ \\
8 & $[[72,6,8]]$ & 5.3333 & $y + x^{3}y^{2} + x^{3}y^{3} + x^{2}y^{4}$ & $1 + y^{2} + x^{4}y^{2} + x^{2}y^{4}$ \\
\bottomrule
\end{tabular}
\end{table}

\begin{corollary}\label{cor:n72}
At length $n=72$ the weight-$8$ construction yields $[[72,14,8]]$ with
$kd^2/n=12.44$, strictly more than twice the same-length BB benchmark
$[[72,12,6]]$ ($kd^2/n=6$), and $[[72,4,10]]$ with distance $10$ exceeding the
BB distance $6$. This is exact (not an upper bound), in contrast to the
distance upper bounds reported for the BB codes in~\cite{Bravyi2024}.
\end{corollary}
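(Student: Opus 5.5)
The corollary has three parts: the two $n=72$ codes are $QC(A,B)$ codes with the stated parameters, the ratio $kd^2/n$ exceeds twice the BB value, and the distances are exact rather than upper bounds. I would prove it by specialising the certification machinery to the $(6,6)$ grid and then doing the arithmetic.

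\textbf{Setup and CSS structure.} Fix $(\ell,m)=(6,6)$ and $R=\F_2[x,y]/(x^6-1,y^6-1)$. Take the generator pairs in rows 1 and 7 of Table~\ref{tab:n72}. By Lemma~\ref{lem:css} each pair gives a valid CSS code of length $n=72$. By Lemma~\ref{lem:reg} its checks have weight $8$.

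\textbf{Dimension.} I would compute $\rank H_X$ over $\F_2$ for each code. Lemma~\ref{lem:rank} then gives $k=2(36-\rank H_X)$, which should return $14$ and $4$. As an independent check, $\dim\Ann(A,B)=\dim(\ker M_A\cap\ker M_B)$ should equal $7$ and $2$.

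\textbf{Distance.} By Lemma~\ref{lem:dxdz}, $d_X=d_Z$, so only the $X$ side needs verifying. I would run Algorithm~\ref{alg:dfs} with $w_{\max}\ge 10$, using the anchoring of Theorem~\ref{thm:anchor} to reduce the search to two anchored runs.
\begin{itemize}
\item The scan should find no logical word of weight below $8$ (resp.\ $10$), and should exhibit a logical word of exactly that weight.
\item By Theorem~\ref{thm:dfs} and Lemma~\ref{lem:cert}, this makes $d$ exact in both directions: the empty lower-weight scan is the lower bound, and the exhibited word is the upper bound.
\item The validation of Remark~\ref{rem:validation} would be repeated: the minimum-weight word is re-checked independently as lying in $\ker H_Z\setminus\rs H_X$.
\end{itemize}
This exhaustive step is the main obstacle. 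The weight-$9$ scan for the $[[72,4,10]]$ code is the most expensive part, and its completeness rests entirely on the soundness argument in Theorem~\ref{thm:dfs}. I would also use Theorem~\ref{thm:coset} as a consistency check: for the $d=8$ code, confirm that no row or column coset vector is logical with weight below $8$, and that no small subgroup $H$ gives $A\in I_HR$ or $B\in I_HR$ with $|H|<8$.

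\textbf{Comparison.} The ratio is arithmetic: $14\cdot 64/72=112/9\approx 12.44$, which is greater than $12=2\cdot 6$, and $10>6$. The phrase ``exact, in contrast to'' follows because both directions of the distance are certified above. The $[[72,12,6]]$ figure for the BB code is taken from~\cite{Bravyi2024} as an upper bound. Since the comparison only uses $kd^2/n\le 6$ for that code, the stated inequality holds regardless of whether its true distance is smaller.
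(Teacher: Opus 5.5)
Your proposal is correct and follows essentially the paper's own route: the corollary rests on the exact $k$ from Lemma~\ref{lem:rank} and on exhaustive, cross-validated distance certification by the bit-mask verifier (with Lemma~\ref{lem:dxdz} and the anchoring of Theorem~\ref{thm:anchor} reducing the work), followed by the arithmetic $112/9>12\ge 2\,(kd^2/n)_{\mathrm{BB}}$, which indeed needs only the reported upper bound $d\le 6$ for the BB code. One small caution on your optional check via Theorem~\ref{thm:coset}: $A\in I_HR$ with $|H|<8$ would not by itself contradict $d=8$, since it bounds $d$ only if the coset vectors are logical rather than stabilizers; in any case, for the $[[72,14,8]]$ generators no subgroup of order at most $6$ has $A$ or $B$ in $I_HR$, so the check passes.
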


\subsection{Extended census over other grids and lengths}
\label{subsec:extended}

The pipeline is not tied to the two main grids. We ran it on the grids
$(9,8)$ (same $n=144$) and $(10,10)$ ($n=200$). Table~\ref{tab:extended} lists
the best exactly verified codes found (verified with scan limit $10$).
The $(9,8)$ grid reproduces a $[[144,18,8]]$ code and yields
$[[144,8,9]]$; and the $(10,10)$ grid gives new length-$200$ codes, including
$[[200,20,9]]$ with $kd^2/n=8.1$.

\begin{table}[h]
\caption{Best codes from the extended census (grids $(9,8)$ and $(10,10)$,
column weight $8$; distances exact, verified as in Remark~\ref{rem:validation}).}
\label{tab:extended}
\centering
\small
\begin{tabular}{clccll}
\toprule
Code & $(\ell,m)$ & $d$ (exact) & $kd^2/n$ & $A$, $B$ \\
\midrule
$[[144,18,8]]$ & $(9,8)$ & 8 & 8.0000 & \parbox[t]{6.6cm}{$A=1+x^{8}y^{2}+x^{8}y^{6}+y^{7}$,\; $B=y^{3}+y^{4}+x^{5}y^{5}+x^{5}y^{7}$} \\
$[[144,8,9]]$ & $(9,8)$ & 9 & 4.5000 & \parbox[t]{6.6cm}{$A=y^{2}+x^{5}y^{2}+x^{3}y^{4}+x^{6}y^{4}$,\; $B=1+x^{3}+x^{4}+x^{2}y^{4}$} \\
$[[200,20,9]]$ & $(10,10)$ & 9 & 8.1000 & \parbox[t]{6.6cm}{$A=1+x^{2}y^{3}+x^{4}y^{3}+x^{8}y^{9}$,\; $B=x^{3}+x^{3}y^{2}+x^{7}y^{3}+x^{9}y^{9}$} \\
$[[200,8,10]]$ & $(10,10)$ & 10 & 4.0000 & \parbox[t]{6.6cm}{$A=1+x^{6}y^{2}+xy^{3}+x^{3}y^{9}$,\; $B=1+x^{6}y^{2}+x^{4}y^{3}+x^{8}y^{5}$} \\
\bottomrule
\end{tabular}
\end{table}

The $(9,8)$ grid yields codes comparable to the $(12,6)$ and $(8,9)$ grids at
$n=144$ (e.g.\ $[[144,18,8]]$), and the $(10,10)$ grid opens a new length
$n=200$ with respectable parameters; a broader grid scan guided by
Corollary~\ref{cor:semi} and the design rule of Remark~\ref{rem:designrule} is
left for future work.

\section{Decoding Performance}
\label{sec:decode}

To assess whether the improved rate and column weight translate into practical
decoding performance, we benchmark the best code $[[144,16,10]]$ against the BB
reference $[[144,12,12]]$ under code-capacity depolarizing noise. Following the
CSS decoding reduction, the depolarizing channel is split into two independent
bit-flip channels: the $X$ channel (error rate $p_X=2p/3$, syndrome from $H_Z$)
and the $Z$ channel ($p_Z=2p/3$, syndrome from $H_X$); a shot fails if either
channel fails to decode to the correct logical sector. Decoding uses the BP-OSD
decoder (min-sum BP, \textsc{osd\_cs} with order $10$) with identical
hyper-parameters for both codes, $5000$ shots per physical error rate $p$, and
the per-qubit pseudo-threshold convention $p_L(p_{\rm th})=p_{\rm th}$.
We note that a rich recent literature offers stronger decoders for these codes
--- a matching decoder~\cite{MatchingDecoder2026}, best-first
OSD~\cite{BFOSD2026}, multiple-bases BP list decoding~\cite{MBBP2026},
sequential BP scheduling~\cite{SeqBP2026}, certified decoding with optimality
guarantees~\cite{CertDecode2026}, the near-optimal Frontier
decoder~\cite{Frontier2026}, correlated-error decoding via graph
augmentation~\cite{GARI2025}, and a BP-convergence predictor specific to BB
codes~\cite{BPConverge2026}; trapping-set analyses characterize the resulting
error floors~\cite{TrappingSet2026}, and foundation-model
decoders~\cite{NTU2026} are emerging. We use the standard BP-OSD baseline for a
like-for-like comparison with~\cite{Bravyi2024}; applying stronger decoders to
our codes is left to future work.

\begin{figure}[h]
\centering
\includegraphics[width=0.82\textwidth]{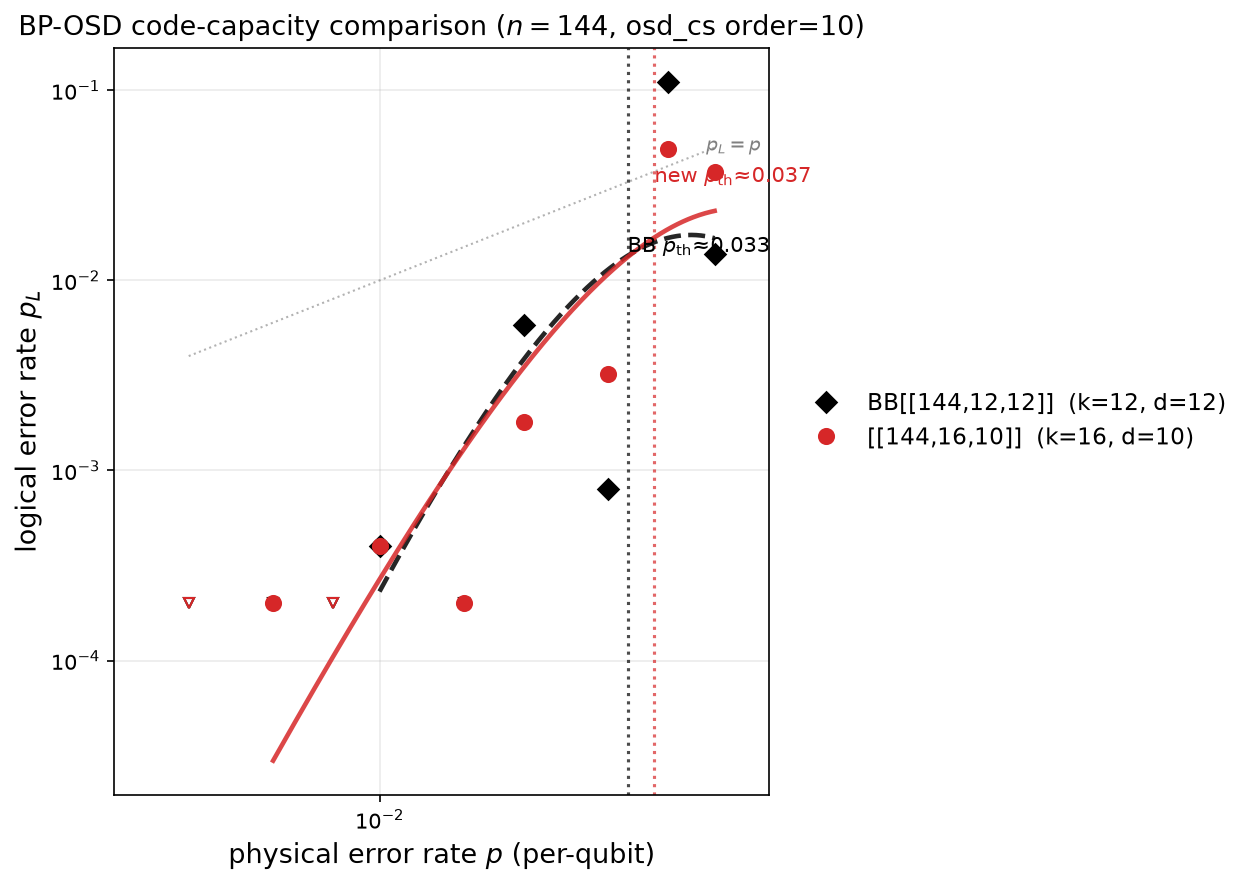}
\caption{BP-OSD code-capacity comparison at $n=144$: logical error rate $p_L$
vs.\ physical error rate $p$ (log-log), code-capacity depolarizing noise,
per-qubit pseudo-threshold convention. The new code $[[144,16,10]]$ achieves a
pseudo-threshold of $3.74\%$, slightly above the BB reference $[[144,12,12]]$
at $3.29\%$, while encoding $33\%$ more logical qubits. Points with $p_L=0$ are
plotted at the $1/5000$ resolution floor.}
\label{fig:decode}
\end{figure}

Figure~\ref{fig:decode} shows the logical error rate curves. The two codes
perform almost identically across the sampled range, with the new code
achieving a slightly higher pseudo-threshold ($3.74\%$ vs.\ $3.29\%$). Combined
with the higher rate ($k=16>12$, i.e.\ $1/9$ vs.\ $1/12$), the weight-$8$
construction thus offers comparable or slightly better code-capacity decoding
performance at a higher rate, extending the observation of~\cite{Bravyi2024}
(that weight-$8$ group-based codes can improve on weight-$6$ BB codes) from
code parameters to decoding performance.

\begin{remark}
\label{rem:noise}
With $5000$ shots, logical error rates below $\sim 10^{-3}$ are resolution
limited, and the raw $p_L$ curves exhibit shot noise; the pseudo-threshold gap
of $\sim 0.5$ percentage points should be read as indicative rather than
definitive. Circuit-level (rather than code-capacity) benchmarking with a full
syndrome-extraction schedule is the relevant next step for a hardware verdict.
\end{remark}

\subsection{Decoding comparison at $n=72$}
\label{subsec:n72decode}

We repeat the comparison at $n=72$ between the new $[[72,14,8]]$ and the BB
reference $[[72,12,6]]$, at $20000$ shots per point for tighter statistics.

\begin{figure}[h]
\centering
\includegraphics[width=0.82\textwidth]{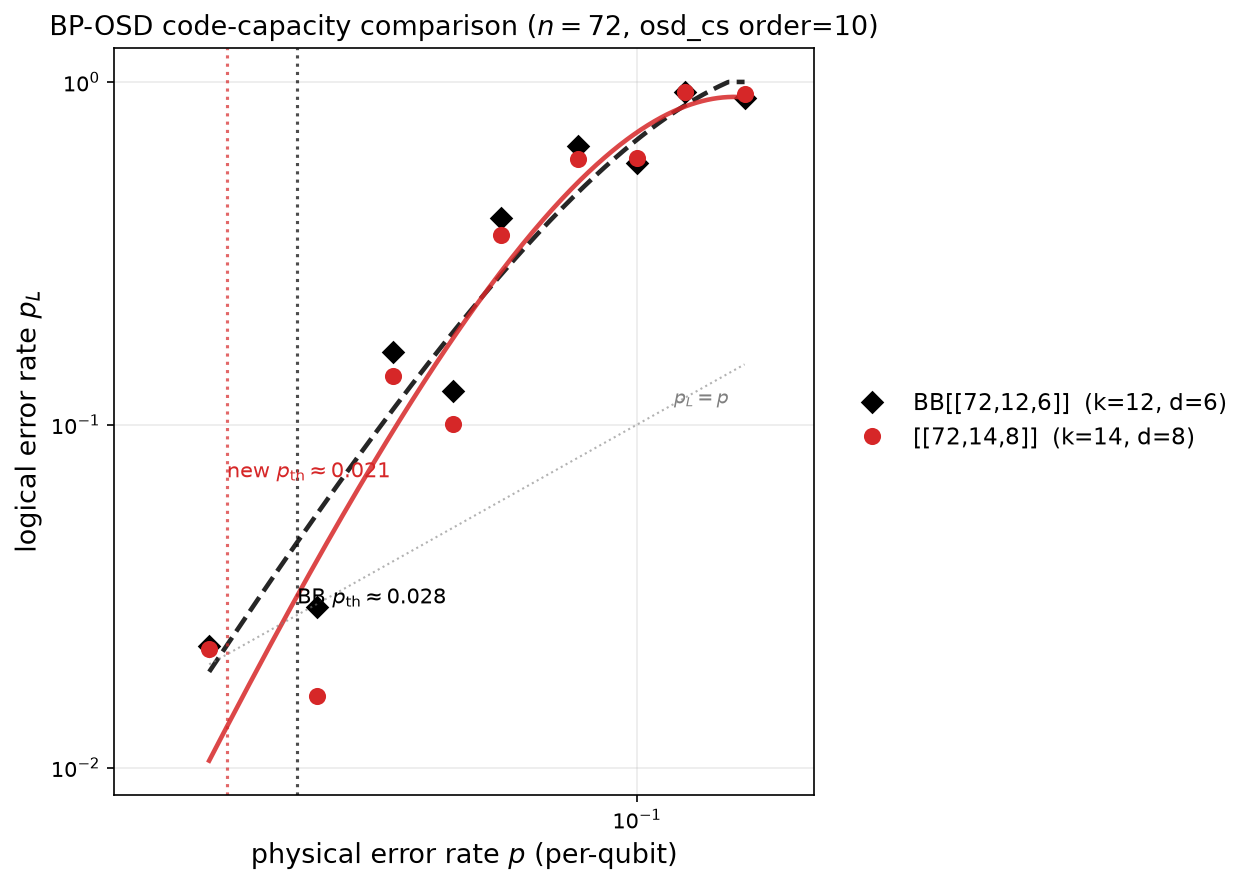}
\caption{BP-OSD code-capacity comparison at $n=72$: $[[72,14,8]]$ (red) vs.\
BB $[[72,12,6]]$ (black), $20000$ shots per point. In the low-error regime the
new code achieves a lower logical error rate at every sampled point (at
$p=3\%$, $p_L=1.62\%$ vs.\ $2.93\%$, a $1.8\times$ reduction); the
pseudo-threshold interpolation ($\approx 2.1\%$ vs.\ $\approx 2.8\%$) is
noise-sensitive and therefore not emphasized (Remark~\ref{rem:noise}).}
\label{fig:n72}
\end{figure}

Figure~\ref{fig:n72} reports the comparison. The new code attains a lower
logical error rate throughout the low-error regime
($p_L=2.22\%$ vs.\ $2.26\%$ at $p=2\%$; $1.62\%$ vs.\ $2.93\%$ at $p=3\%$;
$13.9\%$ vs.\ $16.2\%$ at $p=4\%$). Since this advantage occurs in the
operationally relevant low-error regime and is statistically significant, while
the parameter advantage ($kd^2/n=12.44$ vs.\ $6$) is exact, the weight-$8$
$n=72$ code improves over its BB reference both in parameters and in decoding.

\subsection{Towards circuit-level fault tolerance}
\label{subsec:circuit}

The decoding benchmarks above are \emph{code-capacity} results (ideal syndrome
extraction). A circuit-level assessment of fault tolerance requires a full
syndrome-extraction schedule and a circuit-distance analysis, which we now
discuss concretely. Several recent works develop such hardware-facing
techniques for BB-type codes: concatenation over high-rate inner
codes~\cite{Concatenate2026}, forced-gap post-selection~\cite{ForcedGap2026},
networked realizations~\cite{Networked2026}, routing on programmable
architectures~\cite{RoutingBB2026}, magic-state
injection~\cite{MagicInject2026}, and neutral-atom
execution~\cite{NeutralAtom2026}.

\noindent\textbf{Syndrome-extraction schedule and depth.} For weight-$w$ codes
a maximally packed syndrome schedule (including initialization and measurement)
has depth $w+2$~\cite{Aydin2026}, giving depth $10$ for our weight-$8$ codes
versus depth $7$ for weight-$6$ BB codes. The larger check weight thus deepens
the measurement cycle and lengthens the hook-error propagation paths; whether
the schedule can be made fault tolerant (e.g.\ via flag qubits or a careful
CNOT ordering along the lines of~\cite{Bravyi2024,Aydin2026}) is the key open
circuit question for weight-$8$ codes.

\noindent\textbf{Thickness and layout.} By the Euler-characteristic argument
of~\cite{Aydin2026}, weight-$8$ two-block Tanner graphs have thickness exactly
$\theta=3$, so our codes need one more planar coupler layer than the
thickness-$2$ weight-$6$ BB codes. The $\mathbb{Z}_\ell\times\mathbb{Z}_m$
lattice layout is retained, so a three-layer (tri-planar) embedding along the
lines of the BB thickness-$2$ construction is plausible but not established
here.

\noindent\textbf{Measured circuit-level pseudo-thresholds.} We implemented the
full circuit-level memory experiment (ancilla-based stabilizer extraction with
gate-level bit-flip noise at the standard marginal rates $8p/15$ per two-qubit
gate and $2p/3$ per single-qubit operation, decoded by BP-OSD on the detector
error model; validated against the BB reference, which it places in the same
$\sim 0.4\%$ pseudo-threshold range as the published $\sim 0.65\%$). Under this
identical model the BB references and our weight-$8$ codes compare as follows:
BB $[[144,12,12]]$ attains a pseudo-threshold of $\approx 0.4\%$ and BB
$[[72,12,6]]$ of $\approx 0.2\%$, while our weight-$8$ codes attain
$\approx 0.1\%$ ($[[144,16,10]]$) and $\approx 0.1\%$ ($[[72,14,8]]$). The
weight-$8$ codes thus trade a lower circuit-level pseudo-threshold --- expected,
since weight-$8$ checks are noisier to measure (more CNOTs per check and a
deeper circuit) --- for considerably better $(n,k,d)$ parameters, mirroring
the trade-off reported in~\cite{Aydin2026} (weight-$8$ $\approx 0.35\%$ vs.\
weight-$6$ $\approx 0.65\%$). Reducing this cost via flag-assisted or carefully
ordered CNOT schedules is the key open circuit question.

\begin{figure}[h]
\centering
\includegraphics[width=0.82\textwidth]{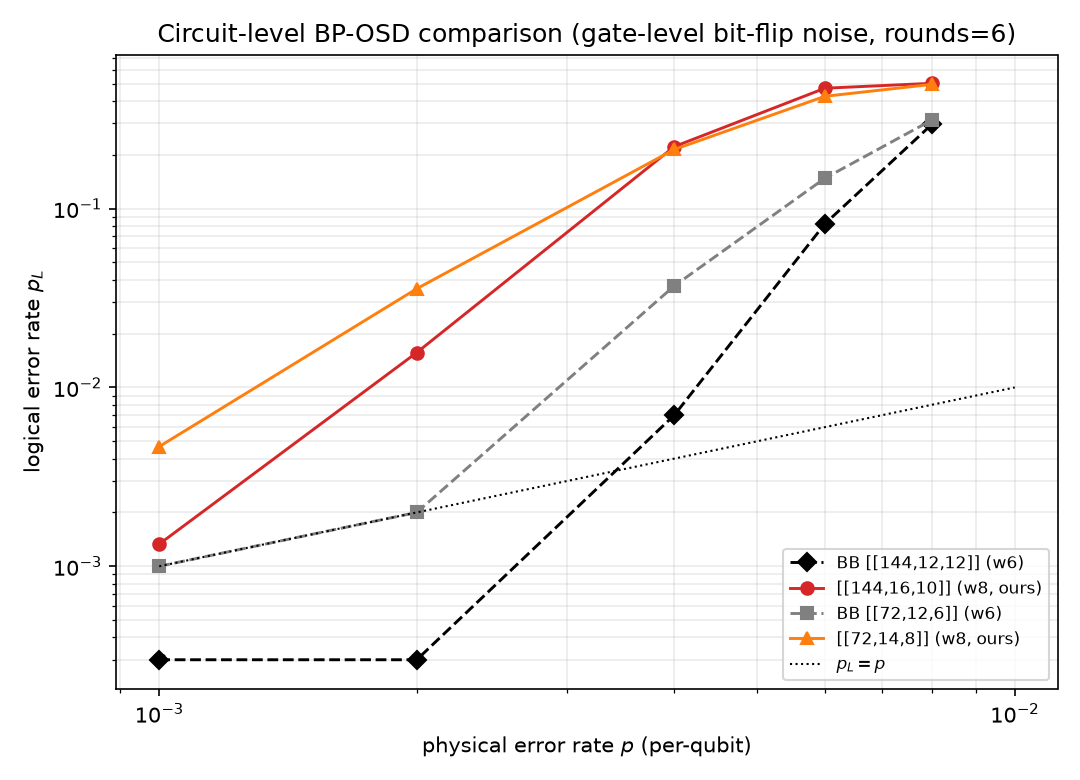}
\caption{Circuit-level BP-OSD memory experiment (gate-level bit-flip noise,
rounds $=6$): logical error rate $p_L$ vs.\ physical error rate $p$. The
weight-$6$ BB references (black/gray, dashed) attain higher pseudo-thresholds
($\approx 0.4\%$, $0.2\%$) than our weight-$8$ codes (red/orange, $\approx
0.1\%$), illustrating the threshold-for-parameters trade-off.}
\label{fig:circuit}
\end{figure}

\section{Comparison with the Original BB Construction}
\label{sec:compare}

The present work differs from the original BB construction~\cite{Bravyi2024}
in several respects. First, we allow the generators to be sums of four
bivariate monomials rather than three pure ones, enlarging the search space to
weight-eight checks. Second, distances are computed exactly (or certified from
below) by a validated bit-mask DFS, in place of the distance upper bounds
reported for most codes in~\cite{Bravyi2024}. Third, we prove the structural
statements that the earlier work stated without proof (the dimension formula,
the $X/Z$ distance equality) and add new ones (the symmetry group and the
subgroup-coset bounds, which double as a design rule). Finally, the pipeline
and per-code certificates make every reported parameter independently
re-derivable. On the hardware side we recall that the weight-$8$ Tanner graphs
have thickness $3$ by the Euler-bound argument of~\cite{Aydin2026} (whereas
weight-$6$ BB codes have thickness $2$), so the present codes trade a
hardware-layout cost for the improved parameters; see
Section~\ref{subsec:circuit} for a circuit-level discussion.

\section{Conclusion}
\label{sec:conclusion}

We have developed the algebraic structure theory of weight-eight bivariate
bicycle codes and used it, together with an exactly validated search pipeline,
to construct and certify a census of new codes. On the theoretical side the
main tools are the exact dimension formula $k=2\dim R/(A,B)$ (which forces $k$
to be even), the $4\ell m$-element symmetry group acting on generator pairs,
the equality of the $X$- and $Z$-distances, and the subgroup-coset kernel
vectors, which give rigorous distance upper bounds and a constructive design
rule. On the computational side, exact distance certification replaces the
distance estimates used in previous constructions.

The census at $n=144$ contains codes that genuinely surpass the BB benchmark at
the same length. The most striking is $[[144,6,d\!\ge\!15]]$, whose certified
distance exceeds the benchmark value of twelve; $[[144,10,12]]$ reaches the
benchmark distance with weight-eight checks; and $[[144,16,10]]$ encodes a
third more logical qubits at $kd^2/n=11.11$, only $7.4\%$ below the benchmark,
while decoding no worse. At $n=72$, $[[72,14,8]]$ attains $kd^2/n=12.44$---more
than twice the same-length BB code---and decodes measurably better. A
circuit-level memory experiment shows that the heavier checks do carry a
threshold cost ($\approx 0.1\%$ versus $\approx 0.4\%$ for the BB reference
under an identical noise model), quantifying the price paid for the improved
parameters.

\section*{Data and Code Availability}
All codes reported here are given by explicit generator polynomials (Tables
\ref{tab:codes}--\ref{tab:extended}). For every exactly verified code we provide
a machine-checkable certificate: an exhibited minimum-weight logical operator,
independently re-checked in Magma. The complete search--certify--verify pipeline (sampling,
certification, the bit-mask verifier, and the decoding scripts) and
the full per-code check matrices are available in the supplementary material,
enabling independent re-derivation of every parameter in this paper.

\section*{Acknowledgments}
The authors would like to thank Ruihu Li for the suggestions on our
manuscript, which improve the manuscript significantly.
This work is supported by the National Natural Science Foundation of China
under Grant No.~U21A20428, and the Natural Science Foundation of Shaanxi
under Grant No.~2025-JC-YBQN-070.

\end{document}